\setlist{itemsep=2pt, topsep=0pt, parsep=0pt, partopsep=0pt}
\newenvironment{marginalign}{%
  \refstepcounter{equation}%
  \align\tag*{\makebox[0pt][l]{(\theequation)}}%
}{%
  \endalign
}
\def\thm@space@setup{%
  \thm@preskip=5pt    %
  \thm@postskip=0pt   %
}
\newcommand{\citet}{\textcite}
\newcommand{\indicator}[1]{\mathbf{1}\{#1\}}
    \def\independenT#1#2{\mathrel{\setbox0\hbox{$#1#2$}%
    \copy0\kern-\wd0\mkern4mu\box0}}
\newcommand{\E}{\mathbb{E}}
\renewcommand{\L}{\mathrm{L}}
\renewcommand{\P}{\mathrm{P}}
\newcommand{\T}{\text{T}}
\newcommand{\tightminus}{\mspace{0mu}\scalebox{1.5}[1.0]{-}\mspace{0mu}}
\newcommand{\tightequals}{\mspace{0mu}\scalebox{1}[1.0]{=}\mspace{0mu}}
\newcommand{\smallminus}{\mathbin{\!-\!}}
\newcommand{\smallequals}{\mathrel{\!=\!}}
\newtheorem{theorem}{Theorem}
\newtheorem{lemma}{Lemma}
\newtheorem{assumption}{Assumption}
\newtheorem{proposition}{Proposition}
\newenvironment{namedassumption}[1]
  {\inneruassumption}
  {\endinneruassumption}
\crefname{figure}{Figure}{Figures}
\crefname{assumption}{Assumption}{Assumptions}
\crefname{inneruassumption}{Assumption}{Assumptions}
\theoremstyle{definition}
\newtheorem{remark}{Remark}
\title{\textbf{Difference-in-Differences when Parallel Trends Holds Conditional on Covariates\footnote{Some of the results in this paper were originally in ``Difference-in-differences with time-varying covariates'' (\citet{caetano-callaway-payne-santanna-2022}).  This paper and our companion paper, ``Difference-in-differences with bad controls'' (Caetano, Callaway, Payne, and Rodrigues (2025)), replace that paper.  The code for the new estimation approaches proposed in the paper is provided in the \texttt{R ptetools} package, which is available on CRAN.  Code for the TWFE and AIPW diagnostics discussed in the paper is available in the \texttt{R twfeweights} package, which is available at \url{https://github.com/bcallaway11/twfeweights}.  We thank Kyle Butts, Andrew Goodman-Bacon, Pedro Sant'Anna, Tymon Sloczynski, as well as many seminar and conference participants for helpful comments.%
}}}
\author{Carolina Caetano\footnote{John Munro Godfrey, Sr.\ Department of Economics, University of Georgia.  \href{mailto:carol.caetano@uga.edu}{carol.caetano@uga.edu}} \and Brantly Callaway\footnote{John Munro Godfrey, Sr.\ Department of Economics, University of Georgia.  \href{mailto:brantly.callaway@uga.edu}{brantly.callaway@uga.edu}}}
\begin{document}

\maketitle

\begin{abstract}
    We consider difference-in-differences identification and estimation strategies when the parallel trends assumption holds conditional on covariates, which can be time-varying, time-invariant, or both. We uncover several weaknesses of two-way fixed effects (TWFE) regressions in this context. The most important, which we call \textit{hidden linearity bias}, arises because transformations that eliminate unit fixed effects also transform the covariates, either implicitly changing the identification strategy or relying on correct model specification. We provide diagnostics for assessing a TWFE regression's susceptibility to hidden linearity bias and propose alternative estimation strategies that circumvent these issues.
\end{abstract}

\bigskip

\bigskip

\noindent {\bfseries {JEL Codes:}} C14, C21, C23

\bigskip

\noindent {\bfseries {Keywords:}}  Difference-in-Differences, Time-Varying Covariates,  Time-Invariant Covariates, Hidden Linearity Bias, Two-way Fixed Effects Regression, Doubly Robust Estimation, Conditional Parallel Trends, Treatment Effect Heterogeneity

\vspace{10em}

\pagebreak

\doublespacing
\setlength{\jot}{2pt} %
\setlength{\abovedisplayskip}{5pt} %
\setlength{\belowdisplayskip}{4pt} %
\allowdisplaybreaks %

\section{Introduction}

In this paper, we study difference-in-differences (DiD) when the parallel trends assumption holds after conditioning on covariates.  Researchers often include covariates in the parallel trends assumption to make it more plausible (e.g., \citet{heckman-ichimura-smith-todd-1998,abadie-2005}), with the aim of comparing the change in outcomes over time for treated units to the change in outcomes over time for untreated units with similar observed characteristics.
We pay careful attention to the different types of covariates that can show up in the parallel trends assumption: time-varying covariates and/or time-invariant covariates.  The econometrics literature and empirical applications often use covariates in substantially different ways.  In the econometrics literature, it is common to assume that the covariates are all time-invariant or, if there are time-varying covariates, to use a pre-treatment value of the time-varying covariates as a time-invariant covariate; see, for example, \citet{abadie-2005,bonhomme-sauder-2011,santanna-zhao-2020,callaway-santanna-2021}.  On the other hand, in empirical work, the most common way to include covariates is in the following TWFE regression\footnote{Our discussion of limitations of TWFE regressions is specific to this particular TWFE regression.  As emphasized in \citet{wooldridge-2025}, many of the drawbacks of common versions of TWFE regressions can be addressed by using more flexible TWFE regressions (e.g., including many additional interaction terms).  Using a more flexible regression could address some of the issues that we highlight below.  We focus on this particular TWFE regression because it is very common in empirical work.  \pagebreak In Supplementary Appendix \ref{sa:applications-review}, we systematically reviewed 25 empirical difference-in-differences papers.  Of these, 15 included covariates in the same manner as in \Cref{eqn:twfe}, 7 did not include time-varying covariates (6 of these did not include any covariates at all, 1 included only time-invariant covariates), 2 included baseline versions of the covariates (i.e., the values of the time-varying covariates in the first period), and 1 included both time-varying covariates as in \Cref{eqn:twfe} and baseline versions of the covariates. In addition, the TWFE regression in \Cref{eqn:twfe} is also emphasized as the way to introduce covariates in DiD applications in textbook treatments of difference-in-differences (e.g., \citet[Sections 6.5.2 and 10.6.4]{wooldridge-2010}, \citet[Chapter 5]{angrist-pischke-2008}, \citet[Chapter 9]{cunningham-2021}).}
\begin{align} \label{eqn:twfe}
    Y_{it} = \theta_t + \eta_i + \alpha D_{it} + X_{it}'\beta + e_{it},
\end{align}
where $\theta_t$ is a time fixed effect, $\eta_i$ is an individual fixed effect, $D_{it}$ is a binary treatment indicator, and $X_{it}$ are time-varying covariates.   $\alpha$ is the coefficient of interest in this regression, sometimes interpreted as ``the causal effect of the treatment'' or, in the presence of treatment effect heterogeneity, often loosely interpreted as some kind of average treatment effect parameter. Being able to include covariates is one of the original main attractions of using a TWFE regression to implement a DiD identification strategy.  For example, \citet{angrist-pischke-2008} write: ``A second advantage of regression-DD is that it facilitates empirical work with regressors.''  Relative to the econometrics literature discussed above, one immediately noticeable difference with the TWFE regression is that it does not include time-invariant covariates. This is because, if time-invariant covariates enter the TWFE model in an analogous way to the time-varying covariates (i.e., with a time-invariant coefficient), then they will be absorbed into the unit fixed effect.  This is a reason commonly given for not including time-invariant covariates in difference-in-differences applications.

We uncover several limitations of this TWFE regression.  Although TWFE regressions with only two time periods are known to be robust to treatment effect heterogeneity under unconditional parallel trends, we show that  TWFE regressions that rely on conditional parallel trends assumptions are susceptible to a number of problems \textit{even in the case with only two time periods}. We show that TWFE regressions can include non-neglible misspecification bias terms %
for any of three reasons: (1) violations of certain linearity conditions on the model for untreated potential outcomes over time, (2) paths of untreated potential outcomes that depend on the level of time-varying covariates in addition to (or instead of) the change in the covariates over time, and (3) paths of untreated potential outcomes that depend on time-invariant covariates.  The first issue is expected, as similar conditions show up in the literature on interpreting cross-sectional regressions under unconfoundedness (\citet{goldsmith-hull-kolesar-2024,blandhol-bonney-mogstad-torgovitsky-2025,hahn-2023}). Also, assuming a linear model for untreated potential outcomes is often a key step for motivating linear models for the outcome itself (see \citet{angrist-pischke-2008} for a number of examples).

Issues (2) and (3) are more subtle. We refer to the bias arising from them as \textit{hidden linearity bias}.  Unlike Issue (1), there is no analog of hidden linearity bias in cross-sectional settings. This bias arises because, to estimate \Cref{eqn:twfe}, the unit fixed effect is removed by transforming the model.  For example, with two time periods, $\alpha$ is estimated by taking first differences to eliminate the unit fixed effect.  A consequence is that the covariates are also differenced, so the estimated model ultimately only controls for \emph{changes} in the time-varying covariates. Below, we consider an application with state-level panel data, a state-level treatment, and time-varying covariates such as a state's population.  In this context, controlling for the change in a state's population may not account for the level of a state's population (i.e., states with similar changes in population could have quite different levels of population), or for other time-invariant covariates such as region.\footnote{It is also common to motivate the TWFE regression in \Cref{eqn:twfe} by a selection-on-observables argument conditional on covariates and unit and time fixed effects.  Versions of the same issues we highlight here also apply in this case, as these approaches require the model to be correctly specified.\label{fn:selection-on-observables}}

An important question is how much the bias discussed above matters in practice.  Since this bias is hard to measure directly, we propose simple diagnostic tools to assess TWFE regressions' sensitivity to hidden linearity bias.  Our idea is to recast $\alpha$ from the TWFE regression as a re-weighting estimator (see \citet{aronow-samii-2016,chattopadhyay-zubizarreta-2023} for related ideas).  We recover the ``implicit regression weights'' (which are straightforward to calculate) and apply them to the levels of time-varying covariates and time-invariant covariates.  If the implicit regression weights balance these covariates, hidden linearity bias is likely small. Otherwise, $\alpha$ may be quite sensitive to violations of linearity conditions.  Moreover, even if none of these issues arise, TWFE regressions deliver weighted averages of conditional ATTs with non-transparent weights that can be negative or exhibit weight-reversal properties similar to the ones pointed out in \citet{sloczynski-2022} in the cross-sectional case.

We propose several new estimation strategies that do not suffer from \textit{any} of the limitations of the TWFE regression discussed above.  Essentially, in line with the parallel trends assumption, we first difference the outcome, but we do not apply the same transformation to the covariates, allowing the levels of time-varying covariates and time-invariant covariates to remain in the estimating equation. We operationalize this idea by building on popular estimators in the DiD literature, particularly the augmented inverse propensity score weighting (AIPW) estimators in \citet{santanna-zhao-2020,callaway-santanna-2021}, though the same idea could be used with other estimators such as matching, inverse probability tilting (\citet{graham-pinto-egel-2012}), entropy balancing (\citet{hainmueller-2012}), and covariate balancing propensity score (\citet{imai-ratkovic-2014}).
We also show that AIPW estimators of the ATT under conditional parallel trends can be reformulated as re-weighting estimators. Combined with our TWFE diagnostics, this allows researchers to compare covariate balancing properties across leading ATT estimators during the ``design phase,'' before using the outcome at all (\citet{ho-imai-king-stuart-2007,rubin-2008,imbens-rubin-2015}).

We conclude the paper by revisiting an application from \citet{cheng-hoekstra-2013} on the effects of stand-your-ground laws on homicides.  We find that including time-varying covariates, such as a state's population and/or median income, in a TWFE regression balances the average of the within-transformed covariates but often does little to improve (and in some cases worsens) covariate balance in terms of the levels of the same covariates or in terms of time-invariant covariates.  Using our approach, covariate balance is substantially improved.  Our estimates are mostly qualitatively similar to those reported in \citet{cheng-hoekstra-2013}, though we find somewhat weaker evidence that stand-your-ground laws increase homicides.

The paper proceeds as follows.  \Cref{sec:two-periods-setup} presents the main assumptions and some preliminary identification results.  \Cref{sec:twfe} discusses the limitations of TWFE regressions under conditional parallel trends, and  \Cref{sec:diagnostics} proposes simple diagnostics for assessing the extent of hidden linearity bias in TWFE regressions.  These sections focus on the baseline two-period, two-group DiD setup, where many of our main insights can already be seen.  \Cref{sec:multiple-periods} extends the arguments to settings with multiple periods and variation in treatment timing across units.  \Cref{sec:estimation} proposes alternative estimation strategies that circumvent TWFE regressions' limitations. \Cref{sec:application} revisits an application from \citet{cheng-hoekstra-2013} on the effects of stand-your-ground laws on homicides. Finally, \Cref{sec:conclusion} concludes.

\section{Setup} \label{sec:two-periods-setup}

 For much of the paper, we consider the canonical two-period setting, with periods denoted by $t^*$ and $t^*\smallminus 1,$ and no treated units in the first period.  Let $D_i$ be a binary treatment indicator, whose time subscript is omitted because treatment differences occur only in the second period.  Let $X_{it}$ denote a $k \times 1$ vector of time-varying covariates for unit $i$ in time period $t$, and $Z_i$ an $l \times 1$ vector of time-invariant covariates.  Let $Y_{it}$ denote the observed outcome, with $Y_{it}(1)$ and $Y_{it}(0)$ the treated and untreated potential outcomes.  Observed and potential outcomes are related by $Y_{it^*} = D_i Y_{it^*}(1) + (1\smallminus D_i)Y_{it^*}(0)$ and $Y_{it^*\tightminus 1} = Y_{it^*\tightminus 1}(0)$;  i.e., in the second period we observe treated (untreated) potential outcomes for treated (untreated) units, and in the first period we observe untreated potential outcomes for all units.\footnote{The discussion above implicitly imposes a SUTVA assumption and a no-anticipation assumption (that pre-treatment outcomes are not affected by eventually participating in the treatment).  These are standard in the DiD literature.  We discuss no-anticipation in more detail in \Cref{sec:multiple-periods}.} We also suppose throughout the paper that all expectations exist and take all statements conditional on covariates to hold almost surely.

\subsection{Identification}

Following the vast majority of the difference-in-differences literature, we target identifying the average treatment effect on the treated (ATT), which is given by
\begin{align*}
    \mathrm{ATT} := \E[Y_{t^*}(1) - Y_{t^*}(0) | D=1].
\end{align*}
We also define a conditional-on-covariates version of the ATT as
\begin{align*}
    \mathrm{ATT}(X_{t^*}, X_{t^*-1}, Z) := \E[Y_{t^*}(1) - Y_{t^*}(0) | X_{t^*}, X_{t^*-1}, Z, D=1].
\end{align*}
We make the following assumptions:
\begin{assumption}[Random Sampling] \label{ass:sampling} The observed data $\{Y_{it^*}, Y_{it^*-1}, X_{it^*}, X_{it^*-1}, Z_i, D_i \}_{i=1}^n$ are i.i.d.
\end{assumption}

\begin{assumption}[Overlap] \label{ass:overlap} $\P(D=1) > \epsilon$ and $\P(D=1|X_{t^*},X_{t^*-1},Z) < 1-\epsilon$ for some $\epsilon>0$.
\end{assumption}

\begin{assumption}[Conditional Parallel Trends]  \label{ass:conditional-parallel-trends}
\begin{align*}
    \E[\Delta Y_{t^*}(0) | X_{t^*}, X_{t^*-1}, Z, D=1] = \E[\Delta Y_{t^*}(0) | X_{t^*}, X_{t^*-1}, Z, D=0] .
\end{align*}
\end{assumption}

\Cref{ass:sampling} assumes we have access to an iid two-period panel. \Cref{ass:overlap} is a standard version of an overlap condition often invoked in the DiD literature (e.g., \citet{abadie-2005}). In practice, it says that, for all treated units, there exist untreated units with the same characteristics. \Cref{ass:conditional-parallel-trends} says that the path of untreated potential outcomes is the same on average for treated and untreated groups after conditioning on time-varying and time-invariant covariates.  \Cref{ass:conditional-parallel-trends} also implicitly restricts the covariates to be unaffected by the treatment, ruling out so-called ``bad controls.''  See Supplementary Appendix \ref{sa:no-bad-controls} and \citet{caetano-callaway-payne-santanna-2022} for more details.

Under \Cref{ass:sampling,ass:conditional-parallel-trends,ass:overlap}, the ATT is identified, and, in particular, it is given by
\begin{align} \label{eqn:att}
    \mathrm{ATT} = \E[\Delta Y_{t^*} | D=1] - \E\Big[ \E[\Delta Y_{t^*} | X_{t^*}, X_{t^*-1}, Z, D=0] \Big| D=1 \Big].
\end{align}
We state this result formally in \Cref{prop:att} in the Supplementary Appendix. It follows from the same arguments as in existing work on difference-in-differences such as \citet{heckman-ichimura-smith-todd-1998,abadie-2005}, up to separately keeping track of the time-varying and time-invariant covariates. This expression says that the ATT can be recovered by comparing the mean path of outcomes for the treated group relative to the path they would have experienced if they had remained untreated.  Under conditional parallel trends, the counterfactual path can be recovered by taking the path of outcomes conditional on time-varying and time-invariant covariates for the untreated group and averaging over the covariate distribution of the treated group.

\section{Interpreting TWFE Regressions} \label{sec:twfe}

This section considers how to interpret $\alpha$ in the TWFE regression in \Cref{eqn:twfe}.  We continue to focus on the setting with two time periods where no units are treated in the first time period and some, but not all, units become treated in the second time period.  This is a favorable setting for TWFE regressions as it does not introduce problems related to using already-treated units in the comparison group (\citet{chaisemartin-dhaultfoeuille-2020,goodman-bacon-2021}).

For interpreting the TWFE regression, many of our results involve linear projections.  Let $\L(D|\Delta X_{t^*})$ denote the (population) linear projection of $D$ on $\Delta X_{t^*}$.\footnote{All of the linear projections in this section include an intercept. This involves a slight abuse of notation where, for example, we augment $\Delta X_{t^*}$ so that it includes an intercept in addition to the change in time-varying covariates over time. Similarly, we also slightly abuse notation in \Cref{eqn:fd} by taking $\beta$ to include an extra parameter in its first element corresponding to the intercept.  For all results below involving linear projections, we assume that they are well-defined.  This typically involves a rank condition, such as that $\E[\Delta X_{t^*} \Delta X_{t^*}']$ is positive definite. The vast majority of our results are provided in terms of population (rather than sample) quantities.  For expressions that only involve means and linear projections (which applies to many of our results below), analogous results hold for the corresponding sample quantities.}  That is,
\begin{align*}
    \L(D|\Delta X_{t^*}) := \Delta X_{t^*}' \E[\Delta X_{t^*} \Delta X_{t^*}']^{-1} \E[\Delta X_{t^*} D] = \Delta X_{t^*}'\gamma.
\end{align*}
Similarly, for $d \in \{0,1\}$, define
\begin{align*}
    \L_d(\Delta Y_{t^*}|\Delta X_{t^*}) := \Delta X_{t^*}'\E[\Delta X_{t^*} \Delta X_{t^*}' | D=d]^{-1} \E[\Delta X_{t^*} \Delta Y_{t^*} | D=d] = \Delta X_{t^*}' \beta_d,
\end{align*}
which is the linear projection of $\Delta Y_{t^*}$ on $\Delta X_{t^*}$ for the treated group (when $d=1$) and for the untreated group (when $d=0$), respectively.

Notice that, with two periods, it is helpful to equivalently re-write \Cref{eqn:twfe} as
\begin{equation} \label{eqn:fd}
  \Delta Y_{it^*} = \alpha D_i + \Delta X_{it^*}'\beta + \Delta e_{it^*}.
\end{equation}
We view \Cref{eqn:fd} as a linear projection model rather than as a linear conditional expectation/structural model, thus allowing for heterogeneous treatment effects.  Our interest in this section is in determining what kind of conditions are required to interpret $\alpha$ as the ATT or at least as a weighted average of some underlying treatment effect parameters.

Next, we provide our first main result on interpreting $\alpha$ in terms of underlying causal effect parameters along with some additional bias terms.
\begin{theorem}\label{thm:twfe} Under \Cref{ass:sampling,ass:overlap,ass:conditional-parallel-trends}, $\alpha$ from \Cref{eqn:fd} can be expressed as
    \begin{align*}
        \alpha &= \E\Big[ w(\Delta X_{t^*}) \mathrm{ATT}(X_{t^*},X_{t^*-1},Z) \Big| D=1 \Big] \label{eqn:thm-twfe-alp-a}\\
        &+ \E\Big[ w(\Delta X_{t^*}) \Big\{ \Big( \E[\Delta Y_{t^*} | X_{t^*}, X_{t^*-1}, Z, D=0] - \E[\Delta Y_{t^*} | X_{t^*}, X_{t^*-1}, D=0] \Big) \tag{A} \\
        & \hspace{100pt} + \Big(\E[\Delta Y_{t^*} | X_{t^*}, X_{t^*-1}, D=0] - \E[\Delta Y_{t^*} | \Delta X_{t^*}, D=0] \Big) \tag{B} \label{eqn:thm-twfe-alp-b} \\
        & \hspace{100pt} + \Big( \E[\Delta Y_{t^*} | \Delta X_{t^*}, D=0] - \L_0(\Delta Y_{t^*} | \Delta X_{t^*})\Big) \Big\} \Big| D=1 \Big] \tag{C}, \label{eqn:thm-twfe-alp-c}
    \end{align*}
    where $\displaystyle
        w(\Delta X_{t^*}) := \frac{1-\L(D | \Delta X_{t^*})}{\E[(1-\L(D|\Delta X_{t^*})) | D=1]},\;\;$  and  $\displaystyle \;\;\E[w(\Delta X_{t^*}) | D=1] = 1.$
\end{theorem}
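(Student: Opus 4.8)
The plan is to start from \Cref{prop:twfe-decomp2}, which already writes $\alpha$ as the sum of \eqref{eqn:prop-twfe-alp1} and \eqref{eqn:prop-twfe-alp2} \emph{without} invoking parallel trends, and then to (i) rewrite the inner difference in \eqref{eqn:prop-twfe-alp1} as the conditional $ATT$ using \Cref{ass:conditional-parallel-trends}, and (ii) expand the inner difference in \eqref{eqn:prop-twfe-alp2} into the three terms (A), (B), (C) by a telescoping identity. Since the weights $w(\Delta X_{t^*})$ are identical in every term of \Cref{prop:twfe-decomp2} and $\Delta X_{t^*}$ is a measurable function of $(X_{t^*}, X_{t^*-1})$, they can simply be carried along and no additional re-weighting step is required; the substantive content has already been done in \Cref{prop:twfe-lp,prop:twfe-decomp2}.

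First I would treat \eqref{eqn:prop-twfe-alp1}. Using the observation rule $Y_{it^*-1} = Y_{it^*-1}(0)$ and $Y_{it^*} = D_i Y_{it^*}(1) + (1-D_i) Y_{it^*}(0)$, I would write $\E[\Delta Y_{t^*} \mid X_{t^*}, X_{t^*-1}, Z, D=1] = \E[Y_{t^*}(1) - Y_{t^*-1}(0) \mid X_{t^*}, X_{t^*-1}, Z, D=1]$ and $\E[\Delta Y_{t^*} \mid X_{t^*}, X_{t^*-1}, Z, D=0] = \E[\Delta Y_{t^*}(0) \mid X_{t^*}, X_{t^*-1}, Z, D=0]$. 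Applying \Cref{ass:conditional-parallel-trends} to the latter shows it equals $\E[\Delta Y_{t^*}(0) \mid X_{t^*}, X_{t^*-1}, Z, D=1] = \E[Y_{t^*}(0) - Y_{t^*-1}(0) \mid X_{t^*}, X_{t^*-1}, Z, D=1]$, so the difference inside \eqref{eqn:prop-twfe-alp1} collapses to $\E[Y_{t^*}(1) - Y_{t^*}(0) \mid X_{t^*}, X_{t^*-1}, Z, D=1] =: ATT(X_{t^*}, X_{t^*-1}, Z)$, which is exactly the leading term in the theorem. (\Cref{ass:overlap} is what guarantees the relevant conditional objects are well-defined on the treated subpopulation.)

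Next I would treat \eqref{eqn:prop-twfe-alp2}, which is purely algebraic: inside the conditional expectation I add and subtract $\E[\Delta Y_{t^*} \mid X_{t^*}, X_{t^*-1}, D=0]$ and then $\E[\Delta Y_{t^*} \mid \Delta X_{t^*}, D=0]$, so that $\E[\Delta Y_{t^*} \mid X_{t^*}, X_{t^*-1}, Z, D=0] - \L_0(\Delta Y_{t^*} \mid \Delta X_{t^*})$ telescopes into precisely (A) $+$ (B) $+$ (C), along the chain of successively coarser conditioning sets $\sigma(X_{t^*}, X_{t^*-1}, Z) \supseteq \sigma(X_{t^*}, X_{t^*-1}) \supseteq \sigma(\Delta X_{t^*})$ (recalling $\Delta X_{t^*}$ is an affine function of $X_{t^*} - X_{t^*-1}$). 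Multiplying through by $w(\Delta X_{t^*})$, taking the expectation conditional on $D=1$, and combining with the leading $ATT$ term gives the stated decomposition. I do not anticipate any real obstacle: the only places that need care are the potential-outcomes bookkeeping in the first step (so that the difference of conditional means genuinely becomes treated-minus-untreated period-$t^*$ potential outcomes for the treated group, which uses \Cref{ass:conditional-parallel-trends} together with the first-period observation rule) and checking that every conditional expectation and linear projection appearing in (A)--(C) is well-defined, which follows from \Cref{ass:sampling,ass:overlap}.
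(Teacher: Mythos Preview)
Your proposal is correct and follows essentially the same approach as the paper: start from \Cref{prop:twfe-decomp2}, use \Cref{ass:conditional-parallel-trends} to identify the inner difference in \eqref{eqn:prop-twfe-alp1} as $ATT(X_{t^*},X_{t^*-1},Z)$, and then telescope \eqref{eqn:prop-twfe-alp2} by adding and subtracting the two intermediate conditional expectations to obtain (A), (B), (C). The paper's proof is terser (it simply asserts the $ATT$ identification and says ``adding and subtracting terms''), but you have filled in exactly the details that justify those two sentences.
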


\medskip
 \Cref{thm:twfe} states that $\alpha$ is a weighted average of underlying conditional ATTs (we discuss the weights in more detail below) plus several undesirable bias terms.\footnote{The proof of \Cref{thm:twfe} (especially the parts concerning the weights) is mechanically related to work on interpreting cross-sectional regressions under unconfoundedness or other related settings (\citet{angrist-1998,aronow-samii-2016,sloczynski-2022,goldsmith-hull-kolesar-2024,blandhol-bonney-mogstad-torgovitsky-2025,hahn-2023}).  The hidden linearity bias terms in the $\alpha$ expression (discussed in detail below) are specific to the DiD setting.} Note that the weights $w(\Delta X_{t^*})$ can be negative if there exist values of $\Delta X_{t^*}$ among the treated group such that $\L(D|\Delta X_{t^*}) > 1$. Since $w(\Delta X_{t^*})$ has mean one, the bias terms should be a first-order concern for empirical researchers.  This differs from several recent papers on interpreting regressions in different contexts, where the regression coefficient ends up including bias terms but the weights have mean zero (\citet{sun-abraham-2021,chaisemartin-dhaultfoeuille-2023b,goldsmith-hull-kolesar-2024}).

The bias in Term (A) arises because the regression in \Cref{eqn:fd} does not include time-invariant covariates.   This term suggests that failing to include time-invariant covariates in the TWFE regression when the path of untreated potential outcomes actually depends on time-invariant covariates undesirably contributes to how $\alpha$ is calculated.  In our application to stand-your-ground laws, this bias term could arise because of state-level time-invariant covariates that affect the path of untreated potential outcomes (e.g., region indicators, if trends in homicides (absent the policy) are different across different regions of the country).\footnote{\label{fn:region-year}As a point of clarification, it is common in empirical difference-in-differences applications that use state-level data to include region-time fixed effects (i.e., to include a region indicator with a time-varying coefficient).  This partially, though not entirely, addresses the issues discussed in this section; see, in particular, \citet[Section 5.2]{wooldridge-2025}. %
Perhaps a better example comes from DiD applications that use individual-level data, where it is less common to include time-invariant covariates with time-varying coefficients in the TWFE regression.  For example, it is uncommon in labor economics to include a person's race as a covariate in a TWFE regression because it does not vary over time, despite the fact that it seems likely that the path of many labor market outcomes depends on race.  Similarly, it is possible, but not common, to include a time-invariant continuous covariate with a time-varying coefficient in a TWFE regression.}

Term (B) is nonzero when the path of untreated potential outcomes depends on the levels of time-varying covariates instead of only on the change in covariates over time.  Together, we refer to Terms (A) and (B) as \textit{hidden linearity bias}.  Hidden linearity bias arises because an additional implication of linearity is that the estimating equation, as a by-product of differencing out the unit fixed effect, only ends up including the change in covariates over time.  In the case where the model is correctly specified, then this transformation of the covariates is appropriate. However, if the model is viewed as an approximation, then an undesirable implication of linearity is that it implicitly changes the identification strategy from one that includes levels of time-varying covariates and time-invariant covariates to one that only includes the change in the covariates over time.  For example, when including state population in a TWFE regression, the researcher likely intends to compare treated and untreated states with similar population levels. However, the TWFE regression effectively compares states with similar population changes, which, of course, could have quite different population levels.

Hidden linearity bias does not show up in cross-sectional settings.  Modern empirical work often views linear regressions as approximations.  In cross-sectional settings, the approximation view can be attractive as linearity itself may be a strong assumption, but using the linear model in estimation is convenient and has other good properties, such as being the best linear approximation to a possibly nonlinear conditional expectation function.  In contrast, the discussion above highlights that linearity has substantially more bite for DiD.  This suggests that linearity should be more carefully examined in panel data settings than in cross-sectional settings.

Term (C) is non-zero when the conditional expectation of the change in untreated potential outcomes conditional on covariates is nonlinear in the change in covariates over time.  This type of linearity condition is the one that researchers would likely suspect to be implicit in the TWFE regression.  A similar term shows up in cross-sectional settings with different papers discussing various conditions under which it is equal to zero (\citet{angrist-1998,blandhol-bonney-mogstad-torgovitsky-2025,hahn-2023}).  In general, this term is non-zero, though it may be reasonable to hope that the conditional expectation is close to being linear in many cases.

Next, we provide an assumption to eliminate the bias terms discussed above.
\begin{assumption}[Additional Assumptions to Rule Out Bias Terms] \label{ass:bias-elimination} \

    \begin{itemize}
        \item[(A)] The path of untreated potential outcomes does not depend on time-invariant covariates.  That is, $\E[\Delta Y_{t^*}(0) | X_{t^*}, X_{t^*-1}, Z, D=0] = \E[\Delta Y_{t^*}(0) | X_{t^*}, X_{t^*-1}, D=0].$
        \item[(B)] The path of untreated potential outcomes only depends on the change in time-varying covariates.  That is, $\E[\Delta Y_{t^*}(0) | X_{t^*}, X_{t^*-1}, D=0] = \E[\Delta Y_{t^*}(0) | \Delta X_{t^*}, D=0].$
        \item[(C)] The path of untreated potential outcomes is linear in the change in time-varying covariates.  That is, $\E[\Delta Y_{t^*}(0) | \Delta X_{t^*}, D=0] = \L_0(\Delta Y_{t^*}|\Delta X_{t^*}).$
    \end{itemize}
\end{assumption}

\begin{theorem} \label{thm:twfe-attX} Under \Cref{ass:sampling,ass:overlap,ass:conditional-parallel-trends}, and if, in addition,  \Cref{ass:bias-elimination} also holds, then
\begin{align*}
    \alpha = \E\Big[ w(\Delta X_{t^*}) \mathrm{ATT}(X_{t^*},X_{t^*-1},Z) \Big| D=1 \Big],
\end{align*}
where the weights $w(\Delta X_{t^*})$ are the same ones defined in \Cref{thm:twfe}.  If, in addition, $\mathrm{ATT}(X_{t^*},X_{t^*-1},Z)$ is constant across all values of $(X_{t^*},X_{t^*-1},Z)$, then
    \begin{align*}
        \alpha = \mathrm{ATT}.
    \end{align*}
\end{theorem}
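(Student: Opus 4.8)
The plan is to build directly on \Cref{thm:twfe}, which already expresses $\alpha$ as $\E[w(\Delta X_{t^*})\,ATT(X_{t^*},X_{t^*-1},Z)\mid D=1]$ plus the three bias terms labeled (A), (B), and (C). Under the hypotheses here we have in addition \Cref{ass:bias-elimination}, so the entire task reduces to showing that each of those three bracketed terms is identically zero, and then separately handling the constant-$ATT$ special case. \Cref{ass:overlap} is invoked only to guarantee that the conditional objects $ATT(X_{t^*},X_{t^*-1},Z)$ and the $D=0$ conditional expectations appearing in \Cref{thm:twfe} are well-defined wherever they are needed.

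First I would note that all three bias terms are conditional expectations (or, in (C), a linear projection) computed over the $D=0$ subpopulation, and that in the second period the untreated group satisfies $\Delta Y_{t^*} = \Delta Y_{t^*}(0)$ almost surely, since no unit is treated in the first period and untreated units remain untreated in the second (as recorded in the Setup). Hence, after conditioning on $D=0$, every occurrence of $\Delta Y_{t^*}$ inside the bias terms may be replaced by $\Delta Y_{t^*}(0)$; in particular $\L_0(\Delta Y_{t^*}\mid \Delta X_{t^*})$ coincides with the within-$D=0$ linear projection of $\Delta Y_{t^*}(0)$ on $\Delta X_{t^*}$. With this substitution, \Cref{ass:bias-elimination}(A) makes the bracket in line (A) vanish, \Cref{ass:bias-elimination}(B) makes the bracket in line (B) vanish, and \Cref{ass:bias-elimination}(C) makes the bracket in line (C) vanish. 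Because each bracket is zero almost surely in its own conditioning variables, its $w$-weighted average over $D=1$ is zero regardless of the sign of the weights, and we are left with $\alpha = \E[w(\Delta X_{t^*})\,ATT(X_{t^*},X_{t^*-1},Z)\mid D=1]$.

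For the second claim, suppose $ATT(X_{t^*},X_{t^*-1},Z)\equiv\tau$ is constant. Pulling $\tau$ out of the expectation and using property (i) of the weights from \Cref{prop:twfe-lp}, namely $\E[w(\Delta X_{t^*})\mid D=1]=1$, yields $\alpha=\tau$. It then remains to identify $\tau$ with the unconditional $ATT$: by the law of iterated expectations and the definition of the conditional $ATT$, $ATT = \E[Y_{t^*}(1)-Y_{t^*}(0)\mid D=1] = \E[ATT(X_{t^*},X_{t^*-1},Z)\mid D=1] = \tau$, so $\alpha = ATT$.

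I do not anticipate a serious obstacle: the content is essentially bookkeeping on top of \Cref{thm:twfe} and \Cref{prop:twfe-lp}. The only point that needs a little care is the observed-versus-potential-outcome substitution — \Cref{ass:bias-elimination} is phrased in terms of $\Delta Y_{t^*}(0)$ while the bias terms in \Cref{thm:twfe} are phrased in terms of observed $\Delta Y_{t^*}$ — and one should check that restricting to $D=0$ makes these agree not only for the conditional expectations but also for the linear projection $\L_0$. A secondary, minor point is verifying that each bracket vanishes almost surely in its own $\sigma$-algebra (rather than merely in expectation), so that reweighting by the possibly-negative $w(\Delta X_{t^*})$ cannot resurrect a nonzero contribution; \Cref{ass:bias-elimination} as stated delivers exactly this almost-sure equality.
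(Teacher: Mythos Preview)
Your proposal is correct and follows essentially the same approach as the paper, which simply states that the result holds immediately from \Cref{thm:twfe} once \Cref{ass:bias-elimination} forces terms (A), (B), and (C) to zero. Your treatment is in fact more careful than the paper's one-line proof: you explicitly address the observed-versus-potential-outcome substitution on $D=0$ (which the paper leaves implicit) and you spell out the constant-$ATT$ case via the mean-one property of the weights, which the paper does not write out.
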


\Cref{thm:twfe-attX} provides sufficient conditions for $\alpha$ from \Cref{eqn:fd} to be equal to a weighted average of conditional ATTs under the conditional parallel trends assumption in \Cref{ass:conditional-parallel-trends}.  The intuition for the result is that the conditions in \Cref{ass:bias-elimination} together imply that
\begin{align*}
    \E[\Delta Y_{t^*} | X_{t^*}, X_{t^*-1}, Z, D=0] = \L_0(\Delta Y_{t^*} | \Delta X_{t^*}).
\end{align*}
This is sufficient for the bias terms in \Cref{thm:twfe} to be equal to 0, and, thus, $\alpha$ is equal to a weighted average of $\mathrm{ATT}(X_{t^*},X_{t^*-1},Z)$.

The result in \Cref{thm:twfe-attX} suggests several potential issues with the TWFE regression in \Cref{eqn:fd}.  First, the additional conditions in \Cref{ass:bias-elimination} are likely to be strong in many applications, and, perhaps more importantly, it is very uncommon for empirical work to grapple with whether or not these types of assumptions are plausible in a given application.

Second, even if one is willing to maintain the additional assumptions in \Cref{ass:bias-elimination}, $\alpha$ from the TWFE regression is still hard to interpret for several reasons.  %
The first issue with interpreting $\alpha$ is that, although the weights have mean one, it is possible to have negative weights for some values of $\mathrm{ATT}(X_{t^*},X_{t^*-1},Z)$.  This can happen for values of the covariates among the treated group where $\L(D|\Delta X_{t^*}) > 1$, which is possible because $\L(D|\Delta X_{t^*})$ is a linear projection of a binary treatment on $\Delta X_{t^*}$ that is not restricted to be between 0 and 1.  Negative weights have often been emphasized as being particularly problematic (see, for example, \citet{chaisemartin-dhaultfoeuille-2020,blandhol-bonney-mogstad-torgovitsky-2025}). %
For example, negative weights imply that it is possible to come up with examples where $\mathrm{ATT}(X_{t^*},X_{t^*-1},Z)$ is positive for all values of the covariates, but $\alpha$ could be negative due to the weighting scheme.  In empirical work, estimating $\L(D|\Delta X_{t^*})$ and checking if there are negative weights is straightforward.  Another issue is that the weights have a \textit{weight-reversal} property (we adapt this terminology from \citet{sloczynski-2022}).  Notice that the ideal weighting scheme would be for $w(\Delta X)$ to be uniformly equal to one, in which case, $\alpha=\mathrm{ATT}$.  Relative to this natural baseline, the weights in \Cref{thm:twfe-attX} indicate that $\alpha$ tends to put too much weight on conditional ATTs for values of the covariates that are relatively uncommon among the treated group relative to the untreated group and puts too little weight on conditional ATTs for values of the covariates that are relatively common among the treated group relative to the untreated group.

Finally, if, in addition to all the previous conditions, conditional ATTs are constant across different values of the covariates, then $\alpha = \mathrm{ATT}$.  This is a treatment effect homogeneity condition with respect to the covariates.\footnote{\citet{meyer-1995,abadie-2005,santanna-zhao-2020} all mention that unmodeled treatment effect heterogeneity with respect to the covariates leads $\alpha$ to not be equal to the ATT.  The first term in the expression for $\alpha$ in \Cref{thm:twfe} provides an explicit expression for $\alpha$ when there is treatment effect heterogeneity, and, in agreement with those papers, our result indicates that $\alpha=\mathrm{ATT}$ when there is no treatment effect heterogeneity with respect to the covariates.}   It \textit{is} somewhat weaker than individual-level treatment effect homogeneity, and it allows for treatment effects to be systematically different for treated units relative to untreated units.  Instead, it says that, for the treated group, treatment effects cannot be systematically different across different values of the covariates.  However, this assumption is likely to be very strong in most economic applications, and it is not commonly considered in empirical work.

These results differ greatly from our earlier result on identifying the ATT in \Cref{eqn:att}.  That result did not require any of the additional conditions in \Cref{ass:bias-elimination}.  %

\begin{remark}[Alternative conditions on the propensity score for interpreting $\alpha$] \label{rem:alternative-pscore-conditions}
    One can also show that $\alpha$ is equal to a weighted average of conditional ATTs under restrictions on the propensity score (rather than restrictions on $\E[\Delta Y_t(0) | X_{t^*}, X_{t^*-1}, Z, D=0]$ as above); namely, $\P(D=1| X_{t^*}, X_{t^*-1}, Z) = \L(D|\Delta X_{t^*})$. See \citet{angrist-1998,aronow-samii-2016,sloczynski-2022} for results along these lines with cross-sectional data under unconfoundedness.  In \Cref{sa:pscore-no-bias}, we argue that, in the panel data context that we consider,  linearity conditions are less plausible
 on the propensity score than on the outcome models discussed above.  Moreover, some leading cases where the propensity score would be linear by construction in cross-sectional settings do not apply in our setting.
\end{remark}

\begin{remark}[Comparison to conditions for other estimation strategies] \label{rem:comparison-to-other-estimation-strategies}
    Interestingly, very similar restrictions as the ones discussed in \Cref{ass:bias-elimination} arise in some recently proposed ``heterogeneity robust'' versions of difference-in-differences.  For example, the imputation approaches proposed in \citet{gardner-thakral-to-yap-2023,borusyak-jaravel-spiess-2024} involve estimating the model $Y_{it}(0) = \theta_t + \eta_i + X_{it}'\beta + e_{it}$ (see \citet[Eq.\,(7)]{gardner-thakral-to-yap-2023} and \citet[Eq.\,(5)]{borusyak-jaravel-spiess-2024}) which, in the two-period context considered here, implicitly uses the assumption that $\E[\Delta Y_{t^*}(0) | X_{t^*}, X_{t^*-1}, Z, D=0] = \L_0(\Delta Y_{t^*}|\Delta X_{t^*})$---the same condition as implied by \Cref{ass:bias-elimination}. Alternatively, the regression adjustment version of \citet{callaway-santanna-2021} implicitly uses the assumption that $\E[\Delta Y_{t^*}(0) | X_{t^*}, X_{t^*-1}, Z, D=0] = \L_0(\Delta Y_{t^*}|X_{t^*-1},Z)$---besides linearity, this condition effectively says that the path of untreated potential outcomes does not depend on $X_{t^*}$ once one controls for $X_{t^*-1}$ and $Z$.  The estimators we propose below do not include either of these types of auxiliary assumptions.  See Appendix \ref{sa:comparison-to-other-estimation-strategies} for a more detailed comparison.
\end{remark}

\section{Covariate Balance Diagnostics} \label{sec:diagnostics}

\Cref{thm:twfe} highlights several potential sources of bias from using the TWFE regression in \Cref{eqn:fd}.  In this section, we quantitatively assess \textit{how much} these bias terms matter in practice.  This is not an easy task as the conditional expectations in Terms (A)-(C) of \Cref{thm:twfe} are difficult to estimate without imposing additional functional form assumptions.  The misspecification bias terms in Terms (A)-(C) amount to violations of linearity that come from differences between $\E[\Delta Y_{t^*} | X_{t^*},X_{t^*-1},Z,D=0]$ and $\L_0(\Delta Y_{t^*} | \Delta X_{t^*})$.  Below, we propose a simple approach to assess the sensitivity of $\alpha$ from the TWFE regression to possible violations of this linearity condition based on assessing implicit covariate balance.  The second part of this section considers related diagnostics of augmented inverse propensity score weighting (AIPW) estimators of the ATT along the lines of the alternative estimators we propose later in the paper.

To motivate this section's results, note that if we could find ``balancing weights'' $\vartheta_0(X_{t^*},X_{t^*-1},Z)$ that re-weight the untreated group such that it has the same distribution of $(X_{t^*},X_{t^*-1},Z)$ as the treated group, then it would be the case that
\begin{align*}
    \E[\Delta Y_{t^*}(0) | D=1 ] &=
    \E\Big[\E[\Delta Y_{t^*}(0)|X_{t^*},X_{t^*-1},Z,D=0] \Big| D=1\Big] \\
    &= \E\Big[\vartheta_0(X_{t^*},X_{t^*-1},Z) \E[\Delta Y_{t^*}(0)|X_{t^*},X_{t^*-1},Z,D=0] \Big| D=0\Big] \\
    &= \E[\vartheta_0(X_{t^*},X_{t^*-1},Z) \Delta Y_{t^*}|D=0],
\end{align*}
and, therefore, that we could recover $\mathrm{ATT} = \E[\Delta Y_{t^*}|D=1] - \E[\vartheta_0(X_{t^*},X_{t^*-1},Z) \Delta Y_{t^*}|D=0]$.  In other words, if we could balance the distribution of covariates for the untreated group relative to the treated group, then we could recover the path of untreated potential outcomes for the treated group by looking at the mean path of outcomes for the untreated group after it has been re-weighted to have the same distribution of covariates as the treated group.    These sorts of balancing weights are related to a large number of weighting estimators.  For example, in population, the weights from propensity score re-weighting satisfy this property (\citet{rosenbaum-rubin-1983}).

One important property of balancing weights is that they balance functions of the covariates across groups; i.e., for some function of the covariates $g$,
\begin{align} \label{eqn:cov-balance}
    \E[g(X_{t^*},X_{t^*-1},Z) | D=1] = \E[\vartheta_0(X_{t^*},X_{t^*-1},Z) g(X_{t^*},X_{t^*-1},Z)|D=0].
\end{align}
We show that TWFE and AIPW estimators can be re-expressed as weighting estimators with particular weights.  Following the discussion above, we apply these weights to functions of the covariates to check how well these weights balance the covariates across groups.  If the weights do not balance the covariates well, the corresponding estimator is more sensitive to violations of modeling assumptions for the outcome than if the weights balance the covariates well.  Heuristically, unbalanced covariates that have larger effects on the outcome are more problematic than covariates that have smaller effects on the outcome.  Finally, although we emphasize TWFE and AIPW, the same sorts of covariate balance diagnostics could be applied to any DiD estimator that can be expressed as a weighting estimator (e.g., DiD versions of matching, inverse probability tilting, entropy balancing, etc.).

\subsection{TWFE Diagnostics} \label{sec:twfe-diagnostics}

Returning to $\alpha$ from the TWFE regression, a useful insight is that it can be written as a re-weighting estimator.  To see this, notice that it follows from Frisch-Waugh-Lovell arguments that
\begin{align}
    \alpha = \E\left[ \frac{(D-\L(D|\Delta X_{t^*})) \Delta Y_{t^*}}{\E[(D-\L(D|\Delta X_{t^*}))^2]} \right]. \label{eqn:fwl}
\end{align}
Let $\pi := \P(D=1),$ then it immediately follows from the law of iterated expectations that
\begin{equation*}
    \alpha = \E[w_1(\Delta X_{t^*}) \Delta Y_{t^*}|D=1] - \E[w_0(\Delta X_{t^*}) \Delta Y_{t^*} | D=0],
\end{equation*}
where $\displaystyle
    w_1(\Delta X_{t^*}) = \frac{\pi (1-\L(D|\Delta X_{t^*}))}{\E[(D-\L(D|\Delta X_{t^*}))^2]}$   and $\displaystyle  w_0(\Delta X_{t^*}) = \frac{(1-\pi) \L(D|\Delta X_{t^*})}{\E[(D-\L(D|\Delta X_{t^*}))^2]}.$ We refer to the weights $w_d(\Delta X_{t^*})$ as \textit{implicit regression weights} below.  Notice that these weights are simple to calculate, as the most complicated terms are linear projections.  %
Building on the intuition for weighting estimators discussed earlier in this section, the diagnostics we propose in this section come from applying these weights to functions of the covariates to check how well the weights balance the covariates across groups.  In the context of cross-sectional data under the assumption of unconfoundedness, \citet{aronow-samii-2016,chattopadhyay-zubizarreta-2023} derive related weights and discuss a number of properties of these types of weights.  For our purposes, the most notable property is that these weights will balance (in mean) the covariates that show up in the regression; thus, in our case, they will balance $\Delta X_{t^*}$ across groups.  See \Cref{prop:balancing-weights-explanation} for a more detailed explanation of why this is the case.  Although the weights balance the mean of $\Delta X_{t^*}$, they do not necessarily balance the distribution/means of the levels of time-varying covariates (that is, $X_{t^*}$ or $X_{t^*-1}$) or of time-invariant covariates $Z$.  Thus, our strategy below is to assess the sensitivity of the TWFE regression to violations of linearity by comparing terms such as %
\begin{align*}
    & \E[w_1(\Delta X_{t^*}) X_{t^*}|D=1] & \hspace{-35pt} & \textrm{to }~ \E[w_0(\Delta X_{t^*}) X_{t^*} | D=0], & \\
    & \E[w_1(\Delta X_{t^*}) X_{t^*-1} | D=1] & \hspace{-35pt} &\textrm{to }~ \E[w_0(\Delta X_{t^*}) X_{t^*-1} | D=0], \textrm{ or} & \\
    & \E[w_1(\Delta X_{t^*}) Z |D=1] & \hspace{-35pt} & \textrm{to }~ \E[w_0(\Delta X_{t^*}) Z | D=0]. &
\end{align*}
If these terms are all close to each other, it suggests that the implicit regression weights effectively balance time-invariant covariates and the levels of time-varying covariates between the treated group and the untreated group, and, hence, that $\alpha$ from the TWFE regression is not much affected by hidden linearity bias.  On the other hand, if these terms are not close to each other, it suggests that $\alpha$ from the TWFE regression could be sensitive to violations of linearity.  %

\subsection{AIPW Diagnostics} \label{sec:aipw-diagnostics}

The main class of estimators that we suggest as alternatives to the TWFE regression are augmented inverse propensity score weighting (AIPW) estimators.  These estimators involve estimating both an outcome regression model and a model for the propensity score.  In this section, we introduce the particular AIPW estimands that we consider.  Following a similar motivation as in the previous section for TWFE regressions, we recast our AIPW approach as a weighting estimator.  Then, we can apply these implicit AIPW weights to the covariates or functions of the covariates, allowing us to assess how well this estimation strategy balances covariate distributions for the treated and untreated groups.\footnote{The results in this section build on several recent papers that have shown that ostensible outcome models can often be reinterpreted as weighting estimators; these include \citet{robins-sued-lei-rotnitzky-2007,kline-2011,chattopadhyay-zubizarreta-2023}, particularly \citet{chattopadhyay-zubizarreta-2023} though this paper is in the context of cross-sectional data under unconfoundedness.  See Supplementary Appendix \ref{sa:additional-results-two-periods} for additional discussion.} As a step towards developing an AIPW estimator, it is a straightforward extension of the identification results in \Cref{eqn:att} to show (see, e.g., \citet{robins-rotnitzky-zhao-1994,sloczynski-wooldridge-2018,santanna-zhao-2020}) that%

\vspace{-1.1cm}
{\small
\begin{marginalign} \label{eqn:aipw-att}
    ATT = \E\left[ \Delta Y_{t^*} \smallminus \E[\Delta Y_{t^*} | X_{t^*}, X_{t^*\tightminus 1}, Z, D\smallequals 0 ] \Big| D\smallequals 1 \right] - \E\left[ w_0^{aipw} \big(\Delta Y_{t^*} \smallminus \E[\Delta Y_{t^*} | X_{t^*}, X_{t^*\tightminus 1}, Z, D\smallequals 0]\big) \Big| D\smallequals 0 \right],
\end{marginalign}}

\vspace{-.4cm}
\noindent where\footnote{All of the weights in this section are functions of $(X_{t^*},X_{t^*-1},Z)$, but we omit this dependence to simplify notation.} {\small $
\displaystyle
    w_0^{aipw} := \frac{\varpi_0^{aipw}}{\E[\varpi_0^{aipw}|D=0]},\;$} with {\small $\;\;\displaystyle \varpi_0^{aipw} := \frac{(1-\pi) p(X_{t^*}, X_{t^*-1}, Z)}{\pi \big(1-p(X_{t^*},X_{t^*-1},Z)\big)}.$}

\smallskip
Estimating the ATT based on this expression requires first the estimation of $\E[\Delta Y_{t^*}|X_{t^*},X_{t^*-1},Z,$ $D\smallequals 0]$ and $p(X_{t^*}, X_{t^*-1}, Z)$.  In this section, we specify a linear working model, $\L_0(\Delta Y_{t^*}|X_{t^*},X_{t^*-1},Z)$, for the expectation. Similarly, let $\tilde{p}(X_{t^*},X_{t^*-1},Z)$ denote a working model for $p(X_{t^*},X_{t^*-1},Z)$ (leading choices include a logit or probit model, but there are other possibilities).\footnote{To be clear, the proof of \Cref{prop:aipw-decomp} does not require any substantive restrictions on the model for the propensity score, but it does use linearity of the outcome regression model.  That said, the outcome regression model could include interactions, higher order terms, etc.}  %
We allow for the possibility that either or both of these models are misspecified.  Given these working models for the outcome regression and the propensity score, we define

\vspace{-1.1cm}
{\small \begin{equation}
    \label{eqn:aipw-att-estimator}
    \widetilde{\mathrm{ATT}} = \E\left[ \Delta Y_{t^*} \smallminus \L_0(\Delta Y_{t^*}|X_{t^*},X_{t^*\tightminus 1},Z) \Big| D\smallequals 1 \right] - \E\left[ \tilde{w}_0^{aipw} \big(\Delta Y_{t^*} \smallminus \L_0(\Delta Y_{t^*}|X_{t^*},X_{t^*\tightminus 1},Z)\big) \Big| D\smallequals 0 \right]
\end{equation}}

\vspace{-.4cm}
\noindent where {\small $\displaystyle
    \tilde{w}_0^{aipw} := \frac{\tilde{\varpi}_0^{aipw}}{\E[\tilde{\varpi}_0^{aipw}|D=0]},$} with {\small $\displaystyle\tilde{\varpi}_0^{aipw} := \frac{(1-\pi) \tilde{p}(X_{t^*}, X_{t^*-1}, Z)}{\pi \big(1-\tilde{p}(X_{t^*},X_{t^*-1},Z)\big)}$}.

$\widetilde{\mathrm{ATT}}$ is a parametric AIPW estimand corresponding to the ATT expression in Equation \ref{eqn:aipw-att} \phantom{\small{(0)}} but with working models replacing the outcome regression and propensity score.  The sample analog of $\widetilde{\mathrm{ATT}}$ is doubly robust, in the sense that $\widetilde{\mathrm{ATT}}=\mathrm{ATT}$ if either $\E[\Delta Y_{t^*}|X_{t^*},X_{t^*-1},Z,D=0] = \L_0(\Delta Y_{t^*} | X_{t^*}, X_{t^*-1}, Z)$ or $p(X_{t^*},X_{t^*-1},Z) = \tilde{p}(X_{t^*}, X_{t^*-1},Z)$, (i.e., if either the outcome regression model or the propensity score model is correctly specified).  The following proposition shows that $\widetilde{\mathrm{ATT}}$ can be written as a re-weighting estimator.
\begin{proposition} To conserve on notation, let $X=(X_{t^*}, X_{t^*-1}, Z)$.  Define $\gamma_0$ as the linear projection coefficient from projecting $p(X)/\big(1-p(X)\big)$ on $X$; similarly define $\tilde{\gamma}_0$ as the linear projection coefficient from projecting $\tilde{p}(X)/\big(1-\tilde{p}(X)\big)$ on $X$.  Then, under \Cref{ass:sampling,ass:conditional-parallel-trends,ass:overlap}, \label{prop:aipw-decomp}
    \begin{align*}
        \widetilde{\mathrm{ATT}} = \E\left[ \vartheta_1^{aipw} \Delta Y_{t^*} \Big| D=1 \right] - \E\left[ \vartheta_0^{aipw} \Delta Y_{t^*} \Big| D=0 \right],
    \end{align*}
    where $\vartheta_1^{aipw}$ and $\vartheta_0^{aipw}$ are weights defined as
    \begin{align*}
        \vartheta_1^{aipw} := 1 \qquad \textrm{and} \qquad \vartheta_0^{aipw} := \tilde{w}_0^{aipw} + \frac{\gamma_0' X}{\E[\gamma_0'X|D=0]} - \frac{\tilde{\gamma}_0'X}{\E[\tilde{\gamma}_0'X|D=0]},
    \end{align*}
    such that $\E[\vartheta_1^{aipw} | D=1] = \E[\vartheta_0^{aipw} | D=0] = 1$ and $\E[\vartheta_0^{aipw} X | D=0] = \E[X|D=1]$.
\end{proposition}

The proof of \Cref{prop:aipw-decomp} is provided in Supplementary Appendix \ref{sa:additional-results-two-periods}.
\Cref{prop:aipw-decomp} shows that the parametric AIPW estimand $\widetilde{\mathrm{ATT}}$ can be re-formulated as a weighting estimator.  It is possible for the weights to be negative; in applications, it is straightforward to calculate the sample analog of the weights---see Supplementary Appendix \ref{sa:additional-results-two-periods} for more details. The main takeaway from \Cref{prop:aipw-decomp} is that, unlike the implicit TWFE weights discussed above, the implicit AIPW weights balance the levels of time-varying covariates and time-invariant covariates across groups.

\begin{remark}[Regression adjustment and IPW as special cases of AIPW]
\label{rem:ra-ipw-as-speical-cases-of-aipw}
    Two special cases of the parametric AIPW estimand are worth mentioning.  First, if we set $\tilde{p}(X_{t^*},X_{t^*-1},Z) = \pi$ (i.e., no covariates enter the propensity score working model), the second term in \Cref{eqn:aipw-att-estimator} equals zero and $\widetilde{\mathrm{ATT}}$ reduces to a regression adjustment estimand.  Second, if the outcome regression working model includes only an intercept (i.e., no covariates enter the outcome regression), $\widetilde{\mathrm{ATT}}$ reduces to an inverse propensity score weighting (IPW) estimand.  Our results in this section therefore cover both of these cases as well.
\end{remark}

\begin{remark}[Additional Diagnostics] \label{rem:additional-diagnostics}
    Although we emphasize covariate balance with respect to the first moment of the covariates, once we have re-formulated TWFE and AIPW as weighting estimators, all of the tools in the covariate balance checking toolkit become available, e.g., comparing higher order moments of covariates after re-weighting, plotting the distribution of covariates after re-weighting, and calculating the implied target population and effective sample size of the estimator.  See \citet{austin-stuart-2015,imbens-rubin-2015} for substantially more details.
\end{remark}

\section{Multiple Periods and Variation in Treatment Timing} \label{sec:multiple-periods}

In this section, we extend the two-period analysis to a setting with multiple periods and variation in treatment timing across units.  This setting is common in empirical work in economics and has been studied in several recent papers (\citet{chaisemartin-dhaultfoeuille-2020,goodman-bacon-2021,callaway-santanna-2021,sun-abraham-2021}, among others).  The proofs of all of the results in this section are provided in Supplementary Appendix \ref{sa:multiple-periods}.

To start with, we introduce some additional notation and discuss how to extend the assumptions from \Cref{sec:two-periods-setup} to the setting considered here (we provide formal versions of these assumptions as \Cref{ass:staggered,ass:no-anticipation,ass:mp-sampling,ass:mp-overlap} in Supplementary Appendix \ref{sa:multiple-periods}). Let $\T$ denote the number of time periods.  In this section, we allow for $\T$ to be larger than two, but we focus on ``short'' panels where $\T$ is considered to be fixed.  We consider a setting with staggered treatment adoption, where (i) no units are treated in the first period (or units that are treated in the first period are dropped) and (ii) treatment timing can vary across units, but once a unit becomes treated, it remains treated in subsequent periods.  Under staggered treatment adoption, a unit's entire sequence of treatments is fully characterized by its ``group'' where group refers to the period when the unit became treated.  Let $G_i$ denote a unit's group and denote the full set of groups by $\mathcal{G} \subseteq \{2, \ldots, \T+1\}$.  We use the convention of setting $G_i = \T+1$ among units that do not participate in the treatment in any period from $2, \ldots, \T$,\footnote{In the literature, it is somewhat more common to set $G_i = \infty$ for never-treated units, but setting $G_i = \T+1$ unifies some of the notation for the TWFE decomposition results presented below.} and we define $\bar{\mathcal{G}} := \mathcal{G} \setminus \{\T+1\}$ as the set of groups that participate in the treatment in any period.  It is also convenient to define a binary indicator for the never-treated group: let $U_i=1$ for units that never participate in the treatment and $U_i=0$ otherwise.

Let $Y_{it}$ denote the observed outcome for unit $i$ in time period $t$.  Under staggered treatment adoption, we can define potential outcomes based on a unit's group; that is, let $Y_{it}(g)$ denote the potential outcome for unit $i$ in time period $t$ if it were in group $g$.  In terms of potential outcomes, the observed outcome is $Y_{it} = Y_{it}(G_i)$.  In other words, the observed outcome is the potential outcome according to unit $i$'s actual group.  To make the notation more transparent, we also define $Y_{it}(0)$ to be unit $i$'s potential outcome in time period $t$ if it never participated in the treatment.  We also make a no-anticipation assumption that says that, in periods before a unit is treated, its observed outcomes are untreated potential outcomes.  This rules out that the treatment affects outcomes in periods before the treatment actually occurs.  Next, define $X_{it}$ to be a $k \times 1$ vector of time-varying covariates, and let $\mathbf{X}_i := (X_{i1}', X_{i2}', \ldots, X_{i\T}')'$ denote the $Tk \times 1$ vector that stacks the time-varying covariates across periods.  Finally, we continue to use $Z_i$ to denote an $l \times 1$ vector of time-invariant covariates.  Besides the staggered treatment adoption and no-anticipation assumptions, we also maintain an iid sampling assumption, a multi-period version of overlap, and a multi-period version of conditional parallel trends, the latter of which we provide here:

\begin{namedassumption}{MP-PT}[Multi-Period Parallel Trends] \label{ass:mp-parallel-trends} For $t=2,\ldots,\T$ and for all $g \in \mathcal{G}$,
\begin{align*}
    \E[\Delta Y_t(0) | \mathbf{X},Z,G=g] = \E[\Delta Y_t(0) | \mathbf{X},Z].
\end{align*}
\end{namedassumption}

Following \citet{callaway-santanna-2021} and \citet{wooldridge-2025}, we target the identification of group-time average treatment effects, defined as
\begin{align*}
    \mathrm{ATT}(g,t) := \E[Y_t(g) - Y_t(0) | G=g].
\end{align*}
$\mathrm{ATT}(g,t)$ is the average treatment effect for group $g$ in period $t$.  We also define the conditional-on-covariates version of group-time average treatment effects
\begin{align*}
    \mathrm{ATT}_{g,t}(\mathbf{x},z) := \E[Y_t(g) - Y_t(0) | \mathbf{X}=\mathbf{x}, Z=z, G=g].
\end{align*}
In \Cref{prop:mp-attgt}, we show that both of these are identified and can be expressed as
\begin{align*}
    \mathrm{ATT}_{g,t}(\mathbf{X},Z) = \E[Y_t - Y_{g-1} | \mathbf{X}, Z, G=g] - \E[Y_t - Y_{g-1} | \mathbf{X}, Z, U=1]
\end{align*}
and \vspace{-32.2pt}
\begin{align*}
    \mathrm{ATT}(g,t) &= \E[Y_t - Y_{g-1} | G=g] - \E\Big[ \E[Y_t - Y_{g-1} | \mathbf{X}, Z, U=1] \Big| G=g\Big].
\end{align*}
This result generalizes the identification result in \Cref{eqn:att} from a setting with two time periods to one with staggered treatment adoption.  The argument closely follows the identification result for $\mathrm{ATT}(g,t)$ in \citet{callaway-santanna-2021} except that some covariates can be time-varying.%

Group-time average treatment effects are important building blocks for our results below on interpreting TWFE regressions.  However, unlike $\alpha$ from the TWFE regression in \Cref{eqn:twfe}, they are functional parameters in the sense that they can vary arbitrarily across $g$ and $t$.  Therefore, it is more natural to compare $\alpha$ from the TWFE regression to an aggregated causal effect parameter; in particular, we consider the following overall average treatment effect on the treated parameter
\begin{align*}
    \mathrm{ATT}^o := \E\Big[ \bar{Y}^{post} - \bar{Y}(0)^{post} \Big| U=0\Big],
\end{align*}
where, for units that ever participate in the treatment, we define
\begin{align*}
    \bar{Y}_i^{post} := \frac{1}{\T-G_i+1} \sum_{t=G_i}^{\T} Y_{it} \qquad \textrm{and} \qquad \bar{Y}_i(0)^{post} := \frac{1}{\T-G_i+1} \sum_{t=G_i}^{\T} Y_{it}(0).
\end{align*}
These are the average observed outcome and average untreated potential outcome, respectively, across unit $i$'s post-treatment time periods.  Thus, $\mathrm{ATT}^o$ is the average treatment effect across the population that participates in the treatment in any time period.  \citet{callaway-santanna-2021} show that
\begin{equation*}
    \mathrm{ATT}^o = \sum_{g \in \bar{\mathcal{G}}}   \sum_{t=g}^{\T} w^o(g,t) \mathrm{ATT}(g,t),
\end{equation*}
where $w^o(g,t) := \bar{p}_g/(\T-g+1)$ and $\bar{p}_g := \P(G=g | G \in \bar{\mathcal{G}})$, which is the probability of being in group $g$ conditional on being among the set of groups that ever participates in the treatment.

\subsubsection*{TWFE Decomposition}

Next, we provide a decomposition of $\alpha$ from \Cref{eqn:twfe} under staggered treatment adoption.  The discussion below uses double-demeaned random variables; for example, $\ddot{Y}_{it} := \displaystyle Y_{it} - \bar{Y}_i - \E[Y_t] + \frac{1}{\T} \sum_{s=1}^{\T} \E[Y_s]$.  We focus on estimating $\alpha$ from \Cref{eqn:twfe} by fixed effects estimation.   Thus, after applying the double-demeaning transformation, we use the following estimating equation:
\begin{equation*}
    \ddot{Y}_{it} = \alpha \ddot{D}_{it} + \ddot{X}_{it}'\beta + \ddot{e}_{it}.
\end{equation*}
Before providing our main results, we need to introduce more notation.  First, notice that a unit's group fully determines $\ddot{D}_{it}$; i.e., $\ddot{D}_{it} = h(G_i,t)$ where
\begin{align*}
    h(g,t) := \indicator{t \geq g} - \frac{\T-g+1}{\T} - \E[D_t] + \frac{1}{\T} \sum_{s=1}^{\T} \E[D_s].
\end{align*}
Next, define the population linear projection of $\ddot{D}_{it}$ on $\ddot{X}_{it}$ as
\begin{align*}
    \L(\ddot{D}_t | \ddot{X}_t) = \ddot{X}_t'\E\left[ \frac{1}{\T}\sum_{s=1}^{\T} \ddot{X}_{s} \ddot{X}_{s}'\right]^{-1} \E\left[ \frac{1}{\T} \sum_{s=1}^{\T} \ddot{X}_{s} \ddot{D}_{s} \right] =: \ddot{X}_t' \Gamma,
\end{align*}
and the population linear projection of $(Y_{it} \tightminus Y_{ig-1})$ on $(X_{it} \tightminus X_{ig-1})$ using the never-treated group as
\begin{align*}
    \L_0\Big(Y_t\tightminus Y_{g-1} \Big| X_t \tightminus X_{g-1}\Big) =: \lambda_{0,t,g-1} + \big(X_t\tightminus X_{g-1}\big)' \Lambda_{0,t,g-1},
\end{align*}
where $\lambda_{0,t,g-1}$ is the intercept and $\Lambda_{0,t,g-1}$ is the slope coefficient, both of which can vary by the period $t$ and the base period $(g\tightminus 1)$.  Furthermore, define $\Lambda_0$ as the vector of coefficients from a TWFE regression of $Y_{it}$ on $X_{it}$ using only the never-treated group (see \Cref{eqn:Lambda0} for the complete expression), and define $\lambda_t := \E[Y_t - X_t'\Lambda_0 | U=1]$. Finally, define
\begin{align*}
    \xi_{t,g-1}(\mathbf{X},Z) := \E[Y_t \tightminus Y_{g-1}|\mathbf{X},Z,U \tightequals 1] - \Big((\lambda_t - \lambda_{g-1}) + (X_t \tightminus X_{g-1})' \Lambda_0\Big),
\end{align*}
which will correspond to misspecification bias similar to terms (A), (B), and (C) in \Cref{thm:twfe} (more details below).  Next, we provide our main result relating $\alpha$ to underlying causal effect parameters and misspecification bias terms under staggered treatment adoption.

\begin{theorem} \label{thm:mp-twfe} Under Assumptions \ref{ass:mp-parallel-trends} and \ref{ass:staggered} to \ref{ass:mp-overlap},{\small
    \begin{marginalign} \label{eqn:mp-twfe}
        \alpha = \sum_{g \in \bar{\mathcal{G}}} \sum_{t=g}^{\T} \E\Big[ w^{twfe}_{g,t}(\ddot{X}_t) \Big( \mathrm{ATT}_{g,t}(\mathbf{X},Z) + \xi_{t,g-1}(\mathbf{X},Z)\Big) \Big| G=g \Big] + \sum_{g \in \bar{\mathcal{G}}} \sum_{t=1}^{g-1} \E\Big[ w^{twfe}_{g,t}(\ddot{X}_t) \xi_{t,g-1}(\mathbf{X},Z) \Big| G=g \Big],
    \end{marginalign}}

    \vspace{-.6cm}
    \noindent \hbox{ \footnotesize where $\displaystyle
        w^{twfe}_{g,t}(\ddot{X}_t) :=  \frac{ \Big(h(g,t) - \ddot{X}_t'\Gamma\Big) \pi_g }{\displaystyle \sum_{l \in \bar{\mathcal{G}}} \sum_{s=l}^{\T} \E\Big[\big(h(l,s) - \ddot{X}_{is}'\Gamma\big) \Big| G\tightequals l\Big] \pi_l}$, $ \displaystyle \sum_{g \in \bar{\mathcal{G}}} \sum_{t=g}^{\T} \E\Big[ w_{g,t}^{twfe}(\ddot{X}_t) \Big| G \tightequals g\Big] \tightequals 1$, and $\displaystyle \sum_{g \in \mathcal{G}} \sum_{t=1}^{g-1} \E\Big[ w_{g,t}^{twfe}(\ddot{X}_t) \Big| G \tightequals g\Big] \tightequals \tightminus 1$.}
\end{theorem}

\bigskip
\Cref{thm:mp-twfe} shows that $\alpha$ from the TWFE regression in \Cref{eqn:twfe} is equal to a weighted average of conditional-on-covariates group-time average treatment effects plus misspecification bias terms.  The first term in Equation \ref{eqn:mp-twfe} \phantom{\tiny{(00)}} covers post-treatment periods while the second term covers pre-treatment periods.  The misspecification bias terms arise in both pre- and post-treatment periods; if there are violations of conditional parallel trends, these would also show up in the second term (see \Cref{prop:mp-decomposition3}). This result is analogous to (and extends) the result in \Cref{thm:twfe} in the case with exactly two periods.  Like the earlier case, the weights on conditional group-time average treatment effects are (i) driven by the estimation method, (ii) can be negative, and (iii) sum to one across post-treatment periods.

The following result decomposes the misspecification bias terms in \Cref{thm:mp-twfe}.

\begin{proposition} \label{prop:mp-twfe-misspecification-bias} Under \Cref{ass:staggered,ass:no-anticipation,ass:mp-sampling,ass:mp-overlap} and \ref{ass:mp-parallel-trends}, the misspecification bias terms in \Cref{thm:mp-twfe} can be decomposed as
    \begin{align*}
        \xi_{t,g-1}(\mathbf{X},Z) &= \E[Y_t \tightminus Y_{g-1}|\mathbf{X},Z,U \tightequals 1] - \E[Y_t \tightminus Y_{g-1}|\mathbf{X},U \tightequals 1]  \Big) \tag{MB-1} \\
        & \hspace{15pt} + \Big(\E[Y_t \tightminus Y_{g-1}|\mathbf{X},U \tightequals 1] - \E[Y_t \tightminus Y_{g-1}|X_t, X_{g-1},U \tightequals 1] \Big) \tag{MB-2} \\
        & \hspace{15pt} + \Big(\E[Y_t \tightminus Y_{g-1}|X_t,X_{g-1},U \tightequals 1] - \E[Y_t \tightminus Y_{g-1}|(X_t \tightminus X_{g-1}),U \tightequals 1] \Big) \tag{MB-3} \\
        & \hspace{15pt} + \Big(\E[Y_t \tightminus Y_{g-1}|(X_t \tightminus X_{g-1}),U \tightequals 1] - \big(\lambda_{0,t,g-1} + (X_t \tightminus X_{g-1})'\Lambda_{0,t,g-1} \big) \Big) \tag{MB-4} \\
        & \hspace{15pt} + \Big( \big(\lambda_{0,t,g-1} - (\lambda_t - \lambda_{g-1})\big) + (X_t \tightminus X_{g-1})'(\Lambda_{0,t,g-1} - \Lambda_0) \Big). \tag{MB-5}
    \end{align*}
\end{proposition}
Next, we discuss the components of the misspecification bias terms in \Cref{prop:mp-twfe-misspecification-bias} along with a set of sufficient conditions to eliminate them from the expression for $\alpha$ in \Cref{thm:mp-twfe}.  These conditions rationalize interpreting $\alpha$ from \Cref{eqn:twfe} as a weighted average of $\mathrm{ATT}_{g,t}(\mathbf{X},Z)$.  The conditions are stated formally in \Cref{ass:mp-nobias}.

\newpage
\noindent \paragraph{Conditions to Eliminate Misspecification Bias}
\begin{itemize}[leftmargin=25pt,noitemsep]
    \item [(1)] The path of untreated potential outcomes does not depend on time-invariant covariates.
    \item [(2)] The path of untreated potential outcomes does not depend on time-varying covariates in other periods besides $(g\tightminus 1)$ and $t$.
    \item [(3)] The path of untreated potential outcomes only depends on the change in time-varying covariates between periods $(g \tightminus 1)$ and $t$.
    \item [(4)] The path of untreated potential outcomes is linear in the change in time-varying covariates.
    \item [(5)] The effect of the change in time-varying covariates over time on the path of untreated potential outcomes is constant across time periods.
\end{itemize}

\smallskip
Each condition serves to set the corresponding term in \Cref{prop:mp-twfe-misspecification-bias} to 0.  Conditions (1)-(3) are all required to deal with the multiple-period version of hidden linearity bias: that transforming the model to eliminate the unit fixed effect also changes the functional form of the time-varying covariates and eliminates the time-invariant covariates, and, hence, effectively results in changing the parallel trends assumption.  Condition (4) is a linearity requirement, similar to Condition (C) in \Cref{ass:bias-elimination} in the two-period case.  %
Condition (5) has no immediate analog in the two-period case.  It says that the path of untreated potential outcomes can depend on the magnitude of changes in time-varying covariates, but the impact of a given change should not vary across time periods.

Although these conditions eliminate misspecification bias, we show in the Supplementary Appendix (\Cref{thm:mp-twfe-nobias}) that, similarly to the two period case, even if these conditions hold, $\alpha$ is still equal to a weighted average of $\mathrm{ATT}_{g,t}(\mathbf{X},Z)$ with weights that are (i) non-transparently driven by the estimation method, (ii) difficult to rationalize, and (iii) can be negative (as pointed out in \citet{chaisemartin-dhaultfoeuille-2020}). We also show that, under additional restrictions on treatment effect heterogeneity with respect to covariates, groups, and time periods, $\alpha=\mathrm{ATT}$. However, like the case with two periods, these extra conditions are likely to be very strong in most applications.  The results in Theorems \ref{thm:mp-twfe} and \ref{thm:mp-twfe-nobias} are related to results in several other papers, including \citet{goodman-bacon-2021,lin-zhang-2022,ishimaru-2022}.  See Supplementary Appendix \ref{sa:multiple-periods} for a detailed discussion.

\subsubsection*{AIPW Estimands with Multiple Periods}

Next, we consider parametric AIPW estimands for group-time average treatment effects and the overall average treatment effect with multiple periods and variation in treatment timing.  This is the population version of our main alternative estimator to the TWFE regression; see the next section for further details.  Define the parametric AIPW estimand for $\mathrm{ATT}(g,t)$ as
\begin{align} \label{eqn:mp-aipw-estimand}
    \widetilde{ATT}^{aipw}(g,t) &= \E\Big[ (Y_t \tightminus Y_{g-1}) \smallminus \tilde{\L}^0_{g,t}(Y_t\tightminus Y_{g-1}|\mathbf{X},Z) \Big| G\smallequals g \Big] \nonumber \\
    & - \E\Big[ \tilde{w}^{0,aipw}_{g,t}(\mathbf{X},Z) \big((Y_t \tightminus Y_{g-1}) \smallminus \tilde{\L}^0_{g,t}(Y_t\tightminus Y_{g-1}|\mathbf{X},Z)\big) \Big| U\smallequals 1 \Big],
\end{align}
where $\tilde{\L}^0_{g,t}(Y_t\tightminus Y_{g-1}|\mathbf{X},Z)$ is the linear projection of $Y_t\tightminus Y_{g-1}$ onto $\mathbf{W}_{g,t}$ among comparison units, with $\mathbf{W}_{g,t}$ a possibly $(g,t)$-varying vector of regressors that may use covariate values from selected time periods (e.g., $X_t$, $X_{g-1}$, and $Z$) and may include interactions and higher-order terms, and {\small $\displaystyle
    \tilde{w}^{0,aipw}_{g,t} := \frac{\tilde{\varpi}^{0,aipw}_{g,t}(\mathbf{X},Z)}{\E[\tilde{\varpi}^{0,aipw}_{g,t}(\mathbf{X},Z)|U=1]},$} with {\small $\displaystyle  \tilde{\varpi}^{0,aipw}_{g,t}(\mathbf{X},Z) = \frac{\pi_0 \tilde{p}_{g,t}(\mathbf{X},Z)}{\pi_g \big(1-\tilde{p}_{g,t}(\mathbf{X},Z)\big)}$,}  $\pi_g := \P(G=g)$, $\pi_0 := \P(U=1)$, and $\tilde{p}_{g,t}(\mathbf{X},Z)$ is a parametric working model for the generalized propensity score
\begin{align*}
    p_{g}(\mathbf{X},Z) := \P(G=g | \mathbf{X},Z,\indicator{G=g} + U = 1),
\end{align*}
which is the conditional probability of being in group $g$ conditional on being in group $g$ or the never-treated group.  We index $\tilde{p}_{g,t}(\mathbf{X},Z)$ by $g$ and $t$ to allow the working model for the generalized propensity score to change across time periods, particularly with respect to which time-varying covariates are included in the model.  Note that $\widetilde{\mathrm{ATT}}^{aipw}(g,t)$ is a model-dependent quantity that depends on the choice of working models $\tilde{\L}^0_{g,t}$ and $\tilde{p}_{g,t}$, and therefore differs in general from the causal target parameter $\mathrm{ATT}(g,t)$.  It is straightforward to show, analogously to the two-period case, that the sample analog of $\widetilde{\mathrm{ATT}}^{aipw}(g,t)$ is doubly robust for $\mathrm{ATT}(g,t)$ (see the next section for more details).  In addition, define the following parametric AIPW estimand for $\mathrm{ATT}^o$:
\begin{align} \label{eqn:mp-aipw-estimand-o}
    \widetilde{\mathrm{ATT}}^{aipw,o} = \sum_{g \in \bar{\mathcal{G}}} \sum_{t=g}^{\T} \widetilde{\mathrm{ATT}}^{aipw}(g,t) w^o(g,t).
\end{align}

\subsubsection*{Covariate Balance Diagnostics}

In Supplementary Appendix \ref{sa:multiple-periods}, we extend the covariate balance diagnostics for TWFE and AIPW with two periods (\Cref{sec:diagnostics}) to the staggered treatment adoption setting considered in this section.  In particular, in \Cref{prop:mp-alpha-balancing-weights-gmin1}, we show that $\alpha$ can be rewritten in terms of implicit regression weights:

\vspace{-1.1cm}
{\small
\begin{align} \label{eqn:alpha-mp-diagnostics}
    \alpha &= \sum_{g \in \bar{\mathcal{G}}} \sum_{t=g}^{\T} \bar{w}^{twfe}(g,t) \left\{ \E\left[ w^{1,twfe}_{g,t}(\mathbf{X},Z) (Y_t\smallminus Y_{g\tightminus 1}) \Big| G\smallequals g\right] - \E\left[ w^{0,twfe}_{g,t}(\mathbf{X},Z) (Y_t\smallminus Y_{g\tightminus 1}) \Big| U\smallequals 1\right] \right\} \\
    & + \sum_{g \in \bar{\mathcal{G}}} \sum_{t=1}^{g-1} \bar{w}^{twfe}(g,t) \left\{ \E\left[ w^{1,twfe}_{g,t}(\mathbf{X},Z) (Y_t\smallminus Y_{g\tightminus 1}) \Big| G\smallequals g\right] - \E\left[ w^{0,twfe}_{g,t}(\mathbf{X},Z) (Y_t\smallminus Y_{g\tightminus 1}) \Big| U\smallequals 1\right] \right\} + r, \nonumber
\end{align}
}

\vspace{-.4cm}
\noindent where the sum weights {\small $\displaystyle \bar{w}^{twfe}(g,t) := \E[w^{twfe}_{g,t}(\ddot{X}_t) | G=g]$}, the expectation weights  {\small $\displaystyle
    w^{1,twfe}_{g,t}(\mathbf{X},Z) := \frac{(\ddot{D}_t - \ddot{X}_t'\Gamma)}{\E[(\ddot{D}_t - \ddot{X}_t'\Gamma) | G=g] }$} and {\small $\displaystyle w^{0,twfe}_{g,t}(\mathbf{X},Z) := \frac{(\ddot{D}_t - \ddot{X}_t'\Gamma)}{\E[(\ddot{D}_t - \ddot{X}_t'\Gamma)|U=1]}$},
and $r$ is a remainder term.\footnote{The remainder term is a byproduct of using $(g-1)$ as the base period in the decomposition of $\alpha$ presented here.  In the discussion after \Cref{prop:mp-alpha-balancing-weights-gmin1}, we argue that this term is likely to be small in most applications, and, indeed, this term is negligible in all diagnostics we report in our application.  Very similar arguments can rationalize a different base period choice, such as $(g-2)$ or $(g-3)$, which could be attractive in applications with anticipation effects (see \citet{callaway-santanna-2021,sun-abraham-2021}).  In \Cref{prop:mp-alpha-balancing-weights3}, we also provide a decomposition that uses period one as the base period, and it involves exactly the same weights but does not include a remainder term.  We prefer the decomposition using $(g-1)$ as the base period because (i) it allows a direct comparison with the parametric AIPW estimand discussed below, where differences are due entirely to differences in the implicit weighting schemes, and (ii) it allows us to quantify how much pre-treatment violations of parallel trends contribute to $\alpha$.\label{fn:mp-twfe-decomp-base-period}}

Similarly, we show in \Cref{prop:mp-aipw-diagnostics} that $\widetilde{\mathrm{ATT}}^{aipw,o}$ can be rewritten in terms of weighted averages of paths of outcomes for each group relative to the never-treated group.  In particular,

\vspace{-1.1cm}
{ \small
\begin{marginalign}\label{eqn:aipw-mp-diagnostics}
    \widetilde{ATT}^{aipw,o} = \sum_{g \in \bar{\mathcal{G}}} \sum_{t=g}^{\T} w^o(g,t) \left\{ \E\left[ \vartheta^{1,aipw}_{g,t}(\mathbf{X},Z) (Y_t\smallminus Y_{g\tightminus 1}) \Big| G\smallequals g\right] - \E\left[ \vartheta^{0,aipw}_{g,t}(\mathbf{X},Z) (Y_t\smallminus Y_{g\tightminus 1}) \Big| U\smallequals 1\right] \right\},
\end{marginalign}
}

\vspace{-.4cm}
\noindent where $w^o(g,t)$ are the same weights as in $\mathrm{ATT}^o$ above, {\small $\vartheta^{1,aipw}_{g,t}(\mathbf{X},Z) := 1 $}, and

\vspace{-1.1cm}
{\small \begin{align*}
    \vartheta^{0,aipw}_{g,t}(\mathbf{X},Z) := \tilde{w}^{0,aipw}_{g,t}(\mathbf{X},Z) + \frac{\gamma_{g,t}'\mathbf{W}_{g,t}}{\E[\gamma_{g,t}'\mathbf{W}_{g,t}|U=1]} - \frac{\tilde{\gamma}_{g,t}'\mathbf{W}_{g,t}}{\E[\tilde{\gamma}_{g,t}'\mathbf{W}_{g,t}|U=1]}.
\end{align*}}

\vspace{-.4cm}
\noindent where $\mathbf{W}_{g,t}$ denotes the covariates used in $\tilde{\L}^0_{g,t}(Y_t\tightminus Y_{g-1}|\mathbf{X},Z)$, $\gamma_{g,t}$ is the linear projection coefficient from projecting $p_g(\mathbf{X},Z)/(1-p_g(\mathbf{X},Z))$ on $\mathbf{W}_{g,t}$, and $\tilde{\gamma}_{g,t}$ is the linear projection coefficient from projecting $\tilde{p}_{g,t}(\mathbf{X},Z)/(1-\tilde{p}_{g,t}(\mathbf{X},Z))$ on $\mathbf{W}_{g,t}$.

Applying the TWFE or AIPW weights to (functionals of) $\mathbf{X}$ and $Z$ allows the researcher to assess how well each approach implicitly balances covariates.  The key differences between TWFE and AIPW covariate balance properties are: (i) TWFE weights depend only on transformed time-varying covariates, not levels or time-invariant covariates, while AIPW balances whatever covariates are included in the model; (ii) AIPW weights balance towards group g (the correct target for $\mathrm{ATT}(g,t)$), while TWFE re-weights both groups; (iii) TWFE can be affected by pre-treatment violations of parallel trends, while AIPW is not; and (iv) both can have negative weights. %

\section{Alternative Estimation Strategies} \label{sec:estimation}

This section discusses alternative estimation strategies that avoid the limitations of TWFE regressions discussed above.  First, we discuss AIPW estimators, which are attractive and straightforward to adapt to our context.  Second, from an empirical perspective, the main complication is that, with panel data and time-varying covariates, the dimension of the covariates can be very large.  We discuss several different dimension reduction ideas in the second part of this section.

To start, define $m_{g,t}(\mathbf{X},Z) := \E[Y_t(0) - Y_{g-1}(0) | \mathbf{X}, Z, U=1]$.  We refer to $m_{g,t}(\mathbf{X},Z)$ as an outcome regression model.  Let $\hat{m}_{g,t}(\mathbf{X},Z)$ and $\hat{p}_{g,t}(\mathbf{X},Z)$ denote estimators of $m_{g,t}(\mathbf{X},Z)$ and the generalized propensity score $p_{g}(\mathbf{X},Z)$, respectively (in line with the discussion above, we allow the model for the generalized propensity score to change across time periods).  %
Then, we consider AIPW estimators of $\mathrm{ATT}(g,t)$ of the form
\begin{align*}
    \widehat{\mathrm{ATT}}^{aipw}(g,t) := \frac{1}{n} \sum_{i=1}^n \Big(\hat{w}^{1,aipw}_{g,t}(\mathbf{X}_i,Z_i) - \hat{w}^{0,aipw}_{g,t}(\mathbf{X}_i,Z_i)\Big) \Big( (Y_t - Y_{g-1}) - \hat{m}_{g,t}(\mathbf{X}_i, Z_i)\Big),
\end{align*}
which, after slightly re-arranging terms, is the sample analog of \Cref{eqn:mp-aipw-estimand} (where here we also allow for the possibility of a nonlinear model for the outcome regression), and where
\begin{align*}
    \hat{w}^{1,aipw}_{g,t}(\mathbf{X}_i,Z_i) :=  \frac{\indicator{G_i=g}}{\hat{\pi}_g} \quad \textrm{and} \quad \hat{w}^{0,aipw}_{g,t}(\mathbf{X}_i,Z_i) := \frac{\indicator{U_i=1} \frac{\hat{p}_{g,t}(\mathbf{X}_i,Z_i)}{1-\hat{p}_{g,t}(\mathbf{X}_i,Z_i)}}{\frac{1}{n} \sum_{j=1}^n \indicator{U_j=1} \frac{\hat{p}_{g,t}(\mathbf{X}_j,Z_j)}{1-\hat{p}_{g,t}(\mathbf{X}_j,Z_j)}}.
\end{align*}
AIPW estimators have been well-studied and have several known properties, as we discuss next.

\subsubsection*{Remarks on AIPW Estimation}

\begin{enumerate}[leftmargin=*, labelsep=5pt]
    \item If we specify parametric models for $m_{g,t}(\mathbf{X},Z)$ and $p_g(\mathbf{X},Z)$%
    , then $\widehat{\mathrm{ATT}}^{aipw}\!\!(g,t)$ is doubly robust for $\mathrm{ATT}(g,t)$. Thus, if either the outcome regression or the propensity score model is correctly specified, then the estimator is consistent for $\mathrm{ATT}(g,t)$. See \citet{robins-rotnitzky-zhao-1994,scharfstein-rotnitzky-robins-1999,sloczynski-wooldridge-2018} for general results on the double robustness property of AIPW estimators and \citet{santanna-zhao-2020} for the specific case of DiD.
    \item Given parametric models for $m_{g,t}(\mathbf{X},Z)$ and $p_g(\mathbf{X},Z)$, asymptotic normality of $\widehat{\mathrm{ATT}}^{aipw}(g,t)$ holds under Assumptions \ref{ass:mp-parallel-trends} and \ref{ass:staggered} to \ref{ass:mp-overlap} and weak regularity conditions following the same arguments as in \citet{callaway-santanna-2021}.
    \item The estimator $\widehat{\mathrm{ATT}}^{aipw}(g,t)$ can be used to construct an estimator of the overall average treatment effect, $\mathrm{ATT}^o$, by averaging over all groups and time periods.  In particular,
    \begin{align*}
        \widehat{\mathrm{ATT}}^o = \sum_{g \in \bar{\mathcal{G}}} \sum_{t=g}^{\T} \hat{w}^o(g,t)\widehat{\mathrm{ATT}}^{aipw}(g,t),
    \end{align*}
    where $\hat{w}^o(g,t) = \frac{\hat{\P}(G=g|U=0)}{T-g+1}$.  This estimator is consistent for $\mathrm{ATT}^o$ and is asymptotically normal under the same conditions discussed above.  Similar results hold for event studies or other aggregated parameters that can be expressed as weighted averages of $\mathrm{ATT}(g,t)$.  These results follow directly from those provided in \citet{callaway-santanna-2021}.%
    \item Regression adjustment is a special case of AIPW when no covariates are included in the generalized propensity score model.  In this case $\widehat{\mathrm{ATT}}^{aipw}(g,t)$ simplifies to
    \begin{align*}
        \widehat{\mathrm{ATT}}^{ra}(g,t) = \frac{1}{n} \sum_{i=1}^n \frac{\indicator{G_i=g}}{\hat{\pi}_g} \Big( Y_{it} - Y_{ig-1} - \hat{m}_{g,t}(\mathbf{X}_i,Z_i) \Big),
    \end{align*}
    where $\widehat{\mathrm{ATT}}^{ra}(g,t)$ denotes the regression adjustment estimator of $\mathrm{ATT}(g,t)$.  In this case, consistent and asymptotically normal estimation of $\mathrm{ATT}(g,t)$ hinges on correct specification of the outcome regression $m_{g,t}(\mathbf{X},Z)$.  Regression adjustment is an important special case because small groups (in terms of the number of observations) are fairly common in empirical work. In such cases, estimating the generalized propensity score can be highly unstable, making regression adjustment a possibly more appropriate alternative.  %
    \item Several extensions to the AIPW estimator above apply in our case, such as accounting for anticipation effects and using alternative comparison groups (e.g., not-yet-treated versus never-treated); see \citet{callaway-santanna-2021} for details. Since these issues are standard, we do not elaborate, though they are often important in empirical work and are supported in our code. %
    \item In cases where the researcher does not wish to specify parametric models for $m_{g,t}(\mathbf{X},Z)$ and $p_g(\mathbf{X},Z)$, essentially the same estimator can be used but with machine learners or nonparametric estimators replacing the parametric models (following similar arguments as those in \citet{chernozhukov-etal-2018}).
\end{enumerate}

\paragraph{Dimension Reduction}\mbox{}

An important practical challenge is that, by construction, the dimension of the covariates in $m_{g,t}(\mathbf{X},Z)$ and $p_g(\mathbf{X},Z)$ is likely to be high as $\mathbf{X}$ is of dimension $T k,$ where $k$ is the number of time-varying covariates.  In most applications, reducing the dimension of the covariates will be desirable.\footnote{\label{fn:marginal-structural-models}Related dimension reduction assumptions also appear in the literature on marginal structural models (e.g., \citet{robins-hernan-brumback-2000}), where outcome models and propensity scores are typically specified as parsimonious functions of covariate history rather than the full covariate trajectory.}  One leading dimension-reducing assumption is that
\begin{equation*}
    m_{g,t}(\mathbf{X},Z) = m_{g,t}( X_t \smallminus X_{g-1}, X_{g-1}, Z) \quad \textrm{and} \quad p_g(\mathbf{X},Z) = p_{g,t}( X_t \smallminus X_{g-1}, X_{g-1}, Z),
\end{equation*}
which says that, in terms of time-varying covariates, the outcome regressions and generalized propensity scores only depend on (i) the change in the time-varying covariates from the base period to the current period and (ii) the level of the time-varying covariates in the base period, rather than the covariates across all time periods.  This type of specification includes both types of covariates that show up in \citet{callaway-santanna-2021} and in imputation approaches such as \citet{gardner-thakral-to-yap-2023,borusyak-jaravel-spiess-2024}.  Other options are to (i) assume that $m_{g,t}(\mathbf{X}, Z) = m_{g,t}(\bar{X},Z)$ and $p_g(\mathbf{X},Z) = p_g(\bar{X},Z)$ where $\bar{X}$ is the average of each time-varying covariate, or (ii) choose the covariates in the outcome regression and generalized propensity score in a data-driven way (see Supplementary Appendix \ref{sa:application} for an example).  Rather than necessarily advocating a particular approach to dimension reduction, we instead emphasize that any approach to dimension reduction should be a carefully and transparently considered step of the analysis rather than being inherited from the estimation strategy, as is the case with TWFE regressions.

\section{Application} \label{sec:application}

In this section, we apply our methods to study the effect of stand-your-ground laws on homicides, building on \citet{cheng-hoekstra-2013}. Stand-your-ground laws remove the duty to retreat in violent altercations. \citet{cheng-hoekstra-2013} study 2000-2010, when 20 states implemented such laws in a staggered fashion. There are contrasting theoretical implications of the effect of stand-your-ground policies on homicides: possible deterrence effects (fewer altercations, fewer homicides) versus increased deadliness (more homicides per altercation).
\citet{cheng-hoekstra-2013} find that stand-your-ground laws increased homicides.%

{ \setlength{\tabcolsep}{9pt}
\renewcommand{\arraystretch}{1.0}

\begin{table}[t!]
    \footnotesize
    \caption{Summary Statistics}
    \label{tab:ss}
    \centering
    \begin{tabular}[t]{lrrrrrrrr}
    \toprule
     & \multicolumn{4}{c}{\textbf{Levels (2000)}} & \multicolumn{4}{c}{\textbf{Changes (2010-2000)}} \\
     & Tr. & Untr. & Diff. & Std.\,$\Delta$ & Tr. & Untr. & Diff. & Std.\,$\Delta$ \\
    \cmidrule(lr){2-5} \cmidrule(lr){6-9}
    \multicolumn{9}{l}{\textbf{Outcome}} \\
    \hspace{10pt}log homicides & 5.19 & 4.70 & 0.49 & 0.34 & 0.08 & -0.03 & 0.11 & 0.34\\
    \addlinespace[0pt]
    \multicolumn{9}{l}{\textbf{Time-Invariant Covariates}} \\
    \hspace{10pt}Midwest & 0.33 & 0.17 & 0.16 & 0.38 & & & & \\
    \hspace{10pt}Northeast & 0.00 & 0.31 & -0.31 & -0.88 & & & & \\
    \hspace{10pt}South & 0.52 & 0.17 & 0.35 & 0.81 & & & & \\
    \hspace{10pt}West & 0.14 & 0.35 & -0.20 & -0.47 & & & & \\
    \addlinespace[0pt]
    \multicolumn{9}{l}{\textbf{Time-Varying Covariates}} \\
    \hspace{10pt}log population & 15.11 & 14.97 & 0.14 & 0.14 & 0.11 & 0.12 & -0.01 & -0.12\\
    \hspace{10pt}log police & 5.75 & 5.71 & 0.03 & 0.17 & -0.01 & 0.00 & -0.01 & -0.07\\
    \hspace{10pt}log prisoners & 6.14 & 5.82 & 0.32 & 0.78 & 0.10 & 0.04 & 0.06 & 0.43\\
    \hspace{10pt}log welfare exp.\ per capita & 6.84 & 6.97 & -0.13 & -0.44 & 0.40 & 0.36 & 0.04 & 0.20\\
    \hspace{10pt}log subsidies per capita & 4.58 & 4.69 & -0.11 & -0.21 & 0.27 & 0.17 & 0.10 & 0.23\\
    \hspace{10pt}log median income & 10.78 & 10.93 & -0.15 & -1.08 & -0.08 & -0.04 & -0.04 & -0.43\\
    \hspace{10pt}poverty rate & 12.59 & 9.99 & 2.60 & 1.06 & 2.68 & 2.11 & 0.57 & 0.46\\
    \hspace{10pt}unemployment rate & 4.05 & 3.69 & 0.36 & 0.42 & 4.93 & 4.86 & 0.08 & 0.04\\
    \hspace{10pt}\% black males 15-24 & 1.99 & 2.83 & -0.85 & -0.20 & 0.15 & 0.18 & -0.03 & -0.07\\
    \hspace{10pt}\% black males 25-44 & 3.51 & 5.31 & -1.80 & -0.22 & -0.27 & -0.66 & 0.39 & 0.34\\
    \hspace{10pt}\% white males 15-24 & 11.31 & 10.43 & 0.88 & 0.05 & -0.36 & 0.27 & -0.63 & -0.34\\
    \hspace{10pt}\% white males 25-44 & 24.52 & 24.58 & -0.06 & -0.00 & -3.59 & -3.90 & 0.31 & 0.04\\
    \bottomrule
    \end{tabular}
    \begin{justify}
        \parbox{\textwidth}{ \normalsize \fontsize{12}{24}\selectfont \textit{Notes:} The table provides summary statistics for the outcomes, time-invariant covariates, and levels and changes in time-varying covariates. States are classified as being treated or untreated based on their treatment status in 2010. The column `Diff.' reports the difference between the average of each variable for the treated group relative to the untreated group. The column `Std.\,$\Delta$.' reports the standardized difference of each variable for the treated group relative to the untreated group, which is the difference divided by the pooled standard deviation.}
    \end{justify}
\end{table}
}

Below, we provide two sets of results using two different subsets of the data.  First, we use a subset of the data that only includes the years 2000 and 2010.  This first dataset is in line with our arguments above for the case of exactly two periods. While we report estimates of the effects of stand-your-ground policies on homicides, much of our main interest is in how different estimation strategies (based on \textit{the same identification strategies}) balance the distribution of covariates. We are able to assess this using the covariate balance diagnostics that we proposed for TWFE and AIPW earlier in the paper.  \Cref{tab:ss} provides summary statistics using the two-period subset of the data and for the full set of covariates used in \citet{cheng-hoekstra-2013}.  There are notable differences in covariates between treated and untreated states. Treated states were more likely to be Southern/Midwestern, had lower median income, more prisoners, and higher population, poverty, and unemployment rates. Some differences also appear in covariate changes: poverty and prisoners increased more in treated states, while median income decreased more.
For our second set of results, we mimic \citet{cheng-hoekstra-2013}'s setting much more closely: we use the full data of all 50 states across all available years, the same set of covariates as in one of their main specifications, and the same sampling weights.

Many of our results in this section are reported in figures that summarize covariate balance after applying the implicit weighting schemes for different estimation strategies.  The figures report the standardized difference between the treated and comparison group for each covariate considered. The standardized difference is the difference between the average value of the covariate for the treated group relative to the untreated group, scaled by the pooled standard deviation of the covariate.  We report both raw covariate balance and covariate balance after applying the implicit TWFE or AIPW weights.\footnote{The figures assess covariate balance, but not whether covariates have the same distribution as for the treated group; the latter holds by construction for AIPW and regression adjustment estimators but not for TWFE.  See \Cref{rem:cov-balance-profile-twfe} for more details.}  %
To give a sense of magnitude: typically, standardized differences of around 0.1 or smaller are considered small; around 0.3 are considered medium; and around 0.5 or larger are considered large. See, for example, \citet{imbens-rubin-2015} for a textbook discussion.

\subsubsection*{Results with Two Periods and only Population and Region as Covariates}

For the first set of results, we consider a highly simplified setting.  The outcome is the log of the number of homicides in a state.  We consider one time-varying covariate: the log of a state's population, and one time-invariant covariate: the Census region (Midwest, Northeast, South, or West). %
The intuition for this identification strategy is that a researcher would like to estimate the impact of the policy by comparing the change in log homicides among treated and untreated states that have similar populations and are located in the same region of the country (although region is not included in the regression, the argument would be that it is implicitly controlled for with the unit fixed effect).  We also assess balance of $\indicator{\log(\text{population}) \leq 15}$ to see how well different approaches balance the fraction of small states, as none of the approaches considered below mechanically force this variable to be balanced.%

\begin{figure}[th]
    \centering
    \caption{Two Period Covariate Balance using TWFE and AIPW}
    \label{fig:two-period-covariate-balance-main-results}
    \footnotesize
    \begin{subfigure}[b]{0.3\textwidth}
        \centering
        $\hat{\alpha} = 0.115 \ (0.100)$ \smallskip

        \includegraphics[width=\textwidth]{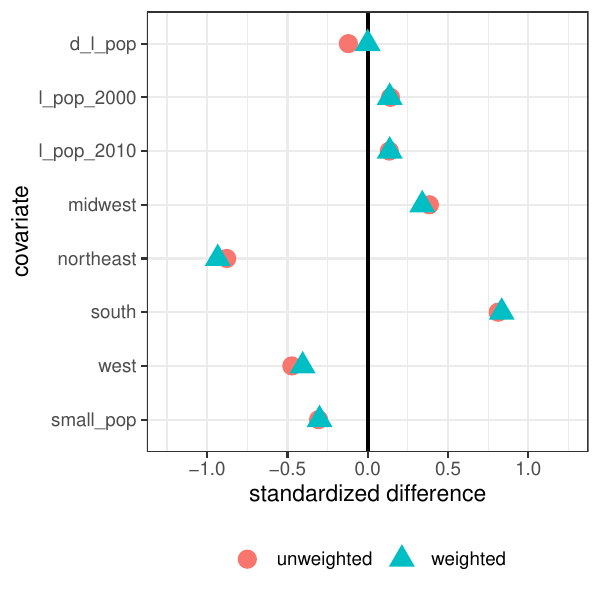}
        \vspace{-20pt}
        \caption{TWFE}
        \label{fig:two-period-covariate-balance-main-results-a}
    \end{subfigure} \hspace{60pt}
    \begin{subfigure}[b]{0.3\textwidth}
        \centering
        $\widehat{\mathrm{ATT}}=0.157 \ (0.107)$ \smallskip

        \includegraphics[width=\textwidth]{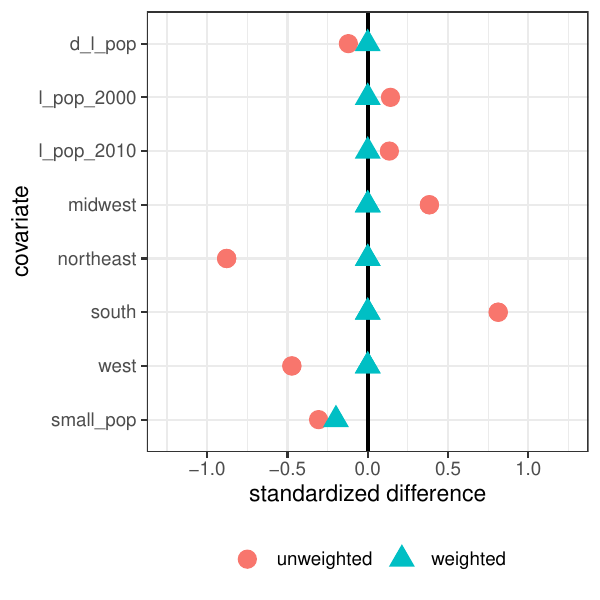}
        \vspace{-20pt}
        \caption{AIPW}
        \label{fig:two-period-covariate-balance-main-results-c}
    \end{subfigure}
    \begin{justify}
        \parbox{\textwidth}{\normalsize \fontsize{12}{24}\selectfont \textit{Notes:} The figure reports estimates of the effects of stand-your-ground laws on homicides and covariate balance statistics using the two-period data discussed in the main text.  The balance statistics are invariant to the outcome.  Different covariates are displayed along the y-axis.  \texttt{d\_l\_pop} is the change in the log of state-level population from 2000 to 2010; \texttt{l\_pop\_2000} and \texttt{l\_pop\_2010} are the level of the log of state-level population in 2000 and 2010, respectively; \texttt{small\_pop} is an indicator for $\log(\text{population}) < 15$; and \texttt{midwest}, \texttt{northeast}, \texttt{south}, \texttt{west} are indicators of Census region.  The x-axis reports standardized differences for the mean of each covariate between the treated group and the untreated group.  The red circles provide the standardized difference for the raw difference, and the blue triangles show the standardized difference after applying the implicit weighting scheme from each estimation method.  Panel (a) comes from regressing $\Delta Y_{t^*}$ on $D_{t^*}$ and $\Delta X_{t^*}$. Panel (b) uses the AIPW estimation strategy discussed in the paper that includes $\Delta X_{t^*}, X_{t^*-1}$, and $Z$ in both the outcome regression model and the propensity score.}
    \end{justify}
\end{figure}

\Cref{fig:two-period-covariate-balance-main-results} summarizes treatment effects and covariate balance. Both panels show qualitatively similar point estimates (TWFE: 0.115, AIPW: 0.157). The AIPW estimate is roughly 40\% larger in magnitude, though neither estimate is statistically different from zero, nor are the estimates statistically different from each other. Because we are only using two time periods, these results are unsurprising.

The covariate balance results are more interesting. In the raw data (the red circles in the figure), there is a small imbalance in population changes, a moderate imbalance in population levels, and a large imbalance in regional distribution. TWFE (Panel a) balances changes in log population (which aligns with the theoretical properties of TWFE), but, in terms of balancing the levels of log population, the regression essentially has no effect: the standardized difference is only 3\% smaller using the implicit TWFE regression weights than in the raw data.  In other words, \textit{controlling for the change in the log of population in the TWFE regression does not result in any more similar treated and untreated groups in terms of log population levels}, which was one of the main goals of including the state's population in the model to begin with.  Similarly, the TWFE regression essentially does not affect the balance of the region indicators or the fraction of small states.  In contrast, AIPW (Panel b) perfectly balances all the covariates included in the model, which is in line with its theoretical properties, and improves balance with respect to the fraction of small states.

Given these results, an interesting follow-up question is: what is the main driver of the improved covariate balance across estimators?  We investigate this question in Supplementary Appendix \ref{sa:application}, where we provide covariate balance statistics for eight different specifications varying both the estimator (between regression adjustment and AIPW) and the covariates included (across (i) changes in time varying covariates only, (ii) pre-treatment levels of time-varying covariates only, (iii) both time-varying and pre-treatment levels of covariates, and (iv) pre-treatment levels of covariates and time-invariant covariates).  The two main takeaways are, first, that the choice of estimator (regression adjustment vs.~AIPW) matters much less than the covariate specification. In fact, both of these give similar estimates to TWFE and have similar covariate balancing properties as TWFE when only the change in time-varying covariates is included.  Second, specifications that directly include pre-treatment level of log population and region indicators do well at balancing the level of log population in each period, suggesting large gains from including the level of a time-varying covariate in at least one period.

\subsubsection*{Results with More Periods and More Covariates}

Next, we use a much closer specification to the one used in \citet{cheng-hoekstra-2013}.  We take the log of homicides per 100,000 people in the state as the outcome, and use sampling weights based on the state's average population.\footnote{See \Cref{rem:sampling-weights} for a discussion on extending our results to include sampling weights.}  We also include additional covariates: log of the number of police per 100,000 population, log of the number of incarcerated persons per 100,000, log of government spending on assistance and subsidies per capita, log of government spending on public welfare per capita, median household income, poverty rate, unemployment rate, and demographic shares of black and white males ages 15–24 and 25–44. %
Because our unit of observation is the state, the size of many of our groups is very small.  This results in AIPW estimation being infeasible (as it is impossible to estimate a generalized propensity score). Therefore, we only report TWFE estimates and regression adjustment estimates.  Finally, in line with \citet{cheng-hoekstra-2013} (but unlike the results above), all the estimates in this section include region-by-year fixed effects.

\begin{figure}[th!]
    \centering
    \caption{Multiple Period Covariate Balance with Additional Covariates}
    \label{fig:multiple-period-covariate-balance-additional-covariates}
    \footnotesize
    \begin{subfigure}[b]{0.24\textwidth}
        \centering
        $\hat{\alpha} = 0.0672 \ (0.026)$ \smallskip
        \includegraphics[width=\textwidth]{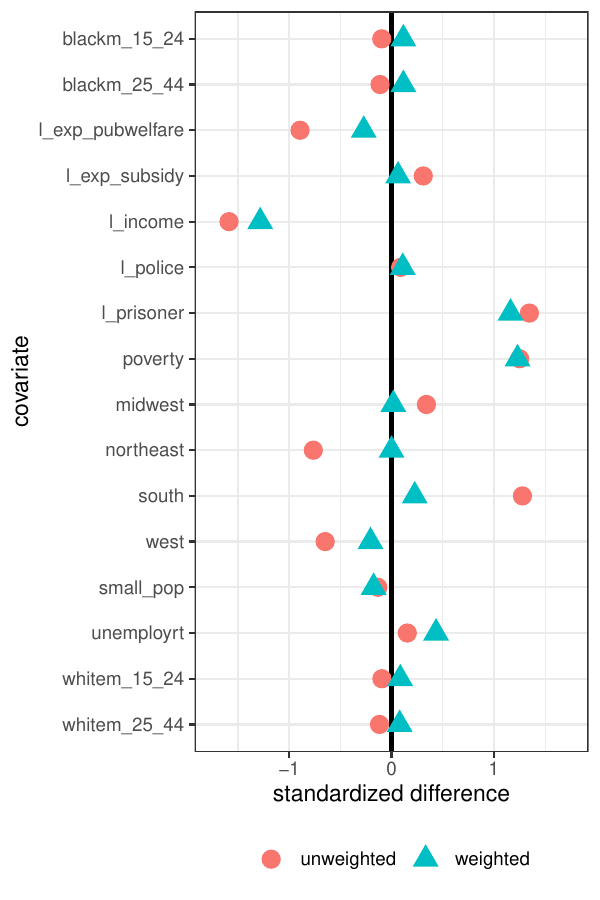}
        \vspace{-20pt}
        \caption{TWFE}
        \label{fig:sub1}
    \end{subfigure}
    \begin{subfigure}[b]{0.24\textwidth}
        \centering
        $\widehat{\mathrm{ATT}} = 0.106 \ (0.058)$ \smallskip
        \includegraphics[width=\textwidth]{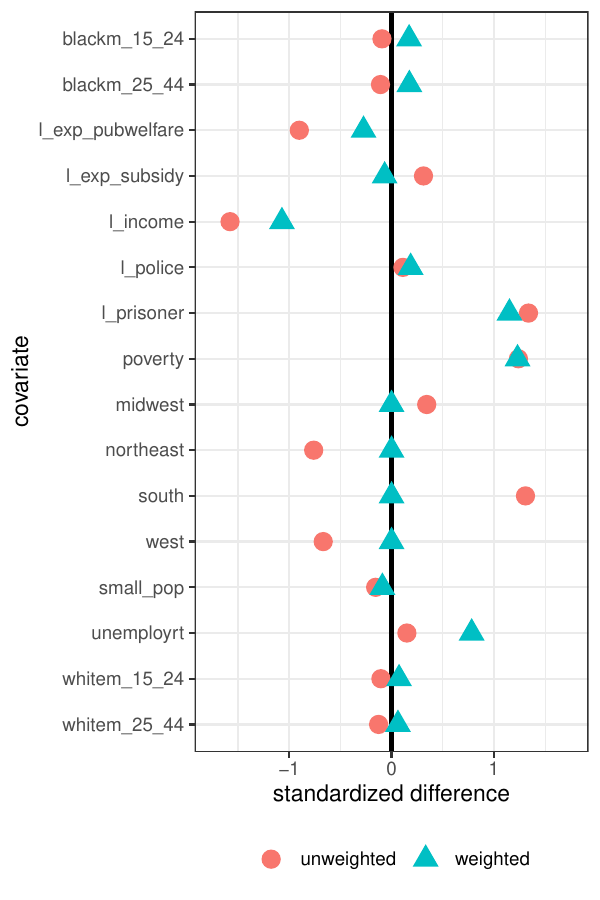}
        \vspace{-20pt}
        \caption{Reg.\,Adj.:\,$\Delta X, Z$}
        \label{fig:sub2}
    \end{subfigure}
    \begin{subfigure}[b]{0.24\textwidth}
        \centering
        $\widehat{\mathrm{ATT}} = 0.019 \ (0.043) $ \smallskip
        \includegraphics[width=\textwidth]{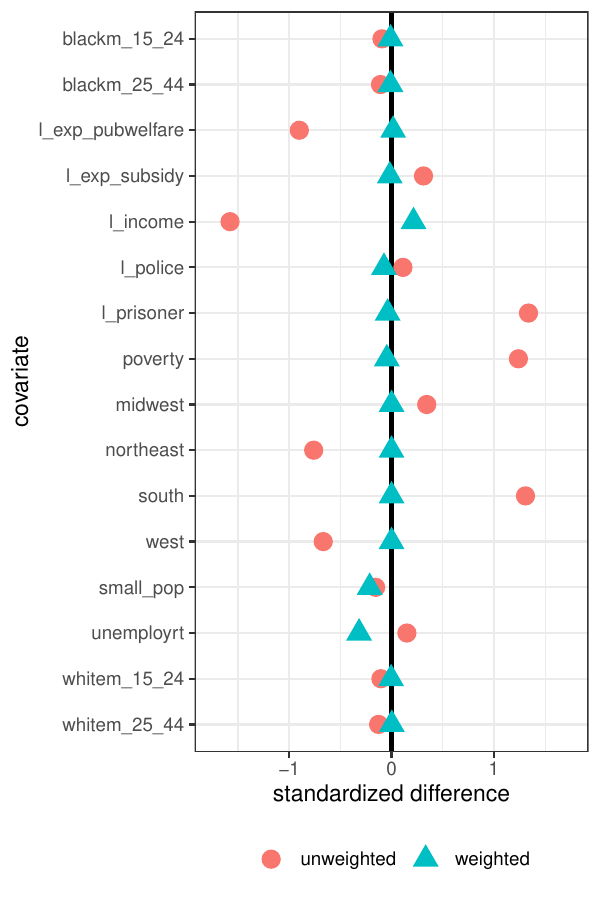}
        \vspace{-20pt}
        \caption{Reg.\,Adj.:\,$X_{g-1}, Z$}
        \label{fig:sub3}
    \end{subfigure}
    \begin{subfigure}[b]{0.24\textwidth}
        \centering
        $\widehat{\mathrm{ATT}} = 0.078 \ (0.180)$ \smallskip
        \includegraphics[width=\textwidth]{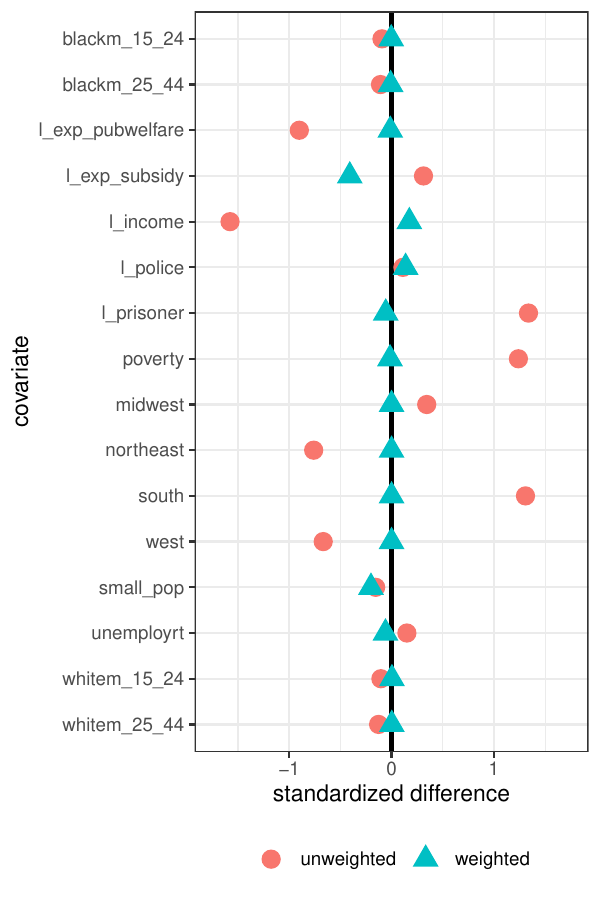}
        \vspace{-20pt}
        \caption{Reg.\,Adj.:\,$\Delta X$, $X_{g-1}, Z$}
        \label{fig:sub4}
    \end{subfigure}
    \begin{justify}
        \parbox{\textwidth}{ \normalsize \fontsize{12}{24}\selectfont \textit{Notes:} The figure reports estimates of the effects of stand-your-ground laws on homicides and covariate balance statistics using all available data from 2000-2010.  See the main text as well as \Cref{tab:ss} for a detailed explanation of each covariate. We also use state-specific average population as sampling weights, as in the main results in \citet{cheng-hoekstra-2013}.  The results in Panel (a) come from a TWFE regression that includes all the covariates listed in the figure. Panels (b)-(d) report regression adjustment results with different specifications for the covariates, as described in the main text.}

    \end{justify}
\end{figure}

\Cref{fig:multiple-period-covariate-balance-additional-covariates} provides the results.  Relative to the two-period results, there are larger differences across specifications.  The TWFE estimate is 0.067 (statistically significant).  Among the regression adjustment specifications, estimates vary considerably: the change-in-covariates specification (Panel (b)) gives a somewhat larger estimate than TWFE that is marginally significant; the level-of-covariates specification (Panel (c)) estimate is close to zero; and the specification with both levels and changes (Panel (d)) produces an estimate similar to TWFE but less precisely estimated.  Part of the difference between TWFE and regression adjustment arises because TWFE is affected by pre-treatment violations of parallel trends (the second term in \Cref{eqn:alpha-mp-diagnostics}).  Manually zeroing out this pre-treatment contribution increases the TWFE estimate by 31\% to 0.088.

The remaining differences are explained by different implicit weighting schemes.  In the figure, covariate balance is assessed in terms of how well the implicit weights across all post-treatment periods balance the average of each covariate. Relative to the raw data, the TWFE regression does improve covariate balance, though there are still some covariates that are severely unbalanced, particularly median income, log of incarceration rate, and poverty rate.  Regression adjustment that includes the change in covariates over time (Panel (b)) does not perform much better.  The last two specifications perform better in terms of covariate balance.  We also calculated the effective untreated group sample size across post-treatment periods for each regression adjustment specification (see Supplementary Appendix \ref{sa:ess-rem} for the specific calculation), finding values of 63.1, 26.7, and 9.9 for the specifications in Panels (b), (c), and (d), respectively.  The sharp drop for Panel (d) is reflected in its larger standard errors.  Taken together, the covariate balance diagnostics and effective sample sizes (both of which can be computed during the ``design phase'' of the analysis) provide an argument for preferring the specification in Panel (c) in this application.

\paragraph{Discussion}\mbox{}

A main takeaway from our application is that the functional form under which covariates enter the model is a first-order concern when controlling for particular covariates in the parallel trends assumption.  Approaches that inherit transformed covariates as a byproduct of the estimation strategy, whether it be TWFE, imputation/regression adjustment, or AIPW, performed poorly in terms of balancing the levels of time-varying covariates or time-invariant covariates.  On the other hand, regression adjustment and AIPW approaches that include \textit{any} level of a time-varying covariate and time-invariant covariates (such as the default implementation of \citet{callaway-santanna-2021}) performed substantially better.  In some cases, including time-invariant covariates and levels \textit{and} changes in time-varying covariates performed better, although this was not uniformly true.  We conjecture that a good heuristic for empirical work is to always include some version of the level of time-varying covariates (e.g., a pre-treatment value of each covariate) and time-invariant covariates. In applications with enough data, one should also consider including changes in time-varying covariates as well.

\section{Conclusion}\label{sec:conclusion}

We have considered difference-in-differences identification strategies when (i) the identification strategy hinges on comparing treated and untreated units with the same observed covariates and (ii) these covariates include time-varying and/or time-invariant variables.  In this empirically common setting, researchers have most often implemented this identification strategy using a TWFE regression as in \Cref{eqn:twfe}.  In this paper, we demonstrated a number of potential weaknesses of TWFE regressions in this context.  Some of these weaknesses, such as lack of robustness to multiple periods and variation in treatment timing, or reliance on certain linearity conditions, are likely not surprising given existing work in the difference-in-differences literature.  However, we also documented several other weaknesses that we termed \textit{hidden linearity bias}, which arise because the transformations used to eliminate the unit fixed effect in the TWFE regression also change the functional form of the covariates.  This transformation thus either effectively changes the identification strategy (to one where only the change in time-varying covariates is included in the parallel trends assumption) or relies heavily on a correctly specified linear model.  Empirical work rarely examines whether these conditions are reasonable, and, in most applications, they are likely to be considered too strong.  We proposed several diagnostic tools for assessing the sensitivity of TWFE regressions that include covariates to hidden linearity bias.  We also proposed an alternative estimation strategy, building on recent work in the DiD literature, that does not suffer from hidden linearity bias, does not require any auxiliary assumptions along the lines mentioned above, and is effectively no more complicated to implement in practice than the TWFE regression.

{
\setlength{\bibitemsep}{0\baselineskip}
\renewcommand*{\bibfont}{\normalsize}
\renewcommand{\bibsetup}{\fontsize{12}{24}\selectfont}
\raggedbottom
\printbibliography
\flushbottom
}

\appendix

\crefname{section}{Appendix}{Appendices}

\setcounter{proposition}{0}
\renewcommand{\theproposition}{A\arabic{proposition}}

\setlength{\topsep}{5pt}

\section{Proofs of Results with Two Periods} \label{app:proofs-two-periods}

\begin{lemma} \label{lem:lip} Under \Cref{ass:sampling,ass:overlap}, for $d \in \{0,1\}$,
\begin{align*}
    \E\Big[\L(D|\Delta X_{t^*}) \L_d(\Delta Y_{t^*} | \Delta X_{t^*}) \Big| D=d\Big] = \E\Big[\L(D|\Delta X_{t^*})\Delta Y_{t^*}\Big|D=d\Big].
\end{align*}
\end{lemma}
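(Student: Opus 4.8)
The plan is to exploit the defining orthogonality property of a (within-group) linear projection: the residual from $\L_d(\Delta Y_{t^*}\mid \Delta X_{t^*})$ is uncorrelated, conditional on $D=d$, with every linear function of $\Delta X_{t^*}$ — and $\L(D\mid \Delta X_{t^*})$ is precisely such a linear function, even though it is an \emph{unconditional} projection. So the identity is really just ``orthogonality of projection residuals,'' applied in the subpopulation $\{D=d\}$.

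Concretely, first I would write $\L(D\mid \Delta X_{t^*}) = \Delta X_{t^*}'\gamma$ where $\gamma := \E[\Delta X_{t^*}\Delta X_{t^*}']^{-1}\E[\Delta X_{t^*}D]$ is a fixed (nonrandom) vector, and likewise $\L_d(\Delta Y_{t^*}\mid \Delta X_{t^*}) = \Delta X_{t^*}'\pi_d$ with $\pi_d := \E[\Delta X_{t^*}\Delta X_{t^*}'\mid D=d]^{-1}\E[\Delta X_{t^*}\Delta Y_{t^*}\mid D=d]$. The second step is to observe that, directly from the definition of $\pi_d$,
\begin{align*}
    \E\big[\Delta X_{t^*}\,\big(\Delta Y_{t^*} - \L_d(\Delta Y_{t^*}\mid \Delta X_{t^*})\big)\,\big|\,D=d\big]
    = \E[\Delta X_{t^*}\Delta Y_{t^*}\mid D=d] - \E[\Delta X_{t^*}\Delta X_{t^*}'\mid D=d]\,\pi_d = 0 .
\end{align*}
The third step is to premultiply this vector identity by the fixed row vector $\gamma'$, which gives $\E[\,\Delta X_{t^*}'\gamma\,(\Delta Y_{t^*} - \L_d(\Delta Y_{t^*}\mid \Delta X_{t^*}))\mid D=d] = 0$, i.e. $\E[\L(D\mid\Delta X_{t^*})\,\L_d(\Delta Y_{t^*}\mid\Delta X_{t^*})\mid D=d] = \E[\L(D\mid\Delta X_{t^*})\,\Delta Y_{t^*}\mid D=d]$, which is exactly the claimed equality (here $\L(D\mid\Delta)$ in the statement is shorthand for $\L(D\mid\Delta X_{t^*})$). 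Equivalently, one can verify it by brute force: both sides equal $\gamma'\E[\Delta X_{t^*}\Delta Y_{t^*}\mid D=d]$ after substituting the closed forms and cancelling $\E[\Delta X_{t^*}\Delta X_{t^*}'\mid D=d]$ against its inverse.

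There is no real obstacle here — the only point worth flagging is the mild asymmetry that $\L(D\mid\Delta X_{t^*})$ is a \emph{population} linear projection (not a within-$D=d$ one), so it is not its own within-group projection; what makes the argument go through is simply that it is still an affine function of $\Delta X_{t^*}$, so it lies in the span against which the within-group projection residual of $\Delta Y_{t^*}$ is orthogonal. I would also note in passing that the existence of the inverse $\E[\Delta X_{t^*}\Delta X_{t^*}'\mid D=d]^{-1}$ (and of $\E[\Delta X_{t^*}\Delta X_{t^*}']^{-1}$) is implicitly assumed wherever these projections are written, consistent with the standing assumptions, so no extra hypothesis is needed.
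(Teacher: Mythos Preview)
Your proposal is correct and essentially matches the paper's proof. The paper proceeds by the direct computation you describe at the end as the ``brute force'' alternative: it plugs in $\L(D\mid\Delta X_{t^*})=\gamma'\Delta X_{t^*}$ and the closed form for $\L_d(\Delta Y_{t^*}\mid\Delta X_{t^*})$, pushes the conditional expectation through the nonrandom factors, and cancels $\E[\Delta X_{t^*}\Delta X_{t^*}'\mid D=d]$ against its inverse to arrive at $\gamma'\E[\Delta X_{t^*}\Delta Y_{t^*}\mid D=d]$; your orthogonality-of-residuals framing is just a cleaner way of packaging the same cancellation.
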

\noindent The proof of \Cref{lem:lip} is provided in the Supplementary Appendix.

\begin{lemma} \label{lem:twfe-denom} Under \Cref{ass:sampling,ass:overlap}, %
  \begin{align*}
    \E\Big[\big(D-\L(D|\Delta X_{t^*})\big)^2\Big] %
    &= \E\Big[ 1-\L(D|\Delta X_{t^*}) \Big| D=1 \Big] \pi.
  \end{align*}
\end{lemma}
\noindent The proof of \Cref{lem:twfe-denom} is provided in the Supplementary Appendix.

\Cref{lem:lip,lem:twfe-denom} are used in some of our arguments below involving linear projections.

\begin{proposition} \label{prop:twfe-lp} Under \Cref{ass:sampling,ass:overlap}, $\alpha$ from the regression in \Cref{eqn:fd} can be decomposed as
    \begin{equation*}
        \alpha = \E\left[ w(\Delta X_{t^*}) \Big(\L_1(\Delta Y_{t^*} | \Delta X_{t^*}) - \L_0(\Delta Y_{t^*} | \Delta X_{t^*})\Big) \Big| D=1\right],
    \end{equation*}
    where $w(\Delta X_t)$ are the same weights as in \Cref{thm:twfe}.
\end{proposition}

\vspace{-5pt}
\begin{proof}
    Starting with the numerator from \Cref{eqn:fwl}, we have that
    \begin{align}
        & \E\Big[\big(D - \L(D|\Delta X_{t^*})\big) \Delta Y_{t^*} \Big] \nonumber \\
        & \hspace{30pt} = \E\Big[\big(1-\L(D|\Delta X_{t^*})\big) \Delta Y_{t^*} \Big| D=1 \Big]\pi - \E\Big[ \L(D|\Delta X_{t^*}) \Delta Y_{t^*} \Big| D=0 \Big] (1-\pi) \nonumber \\
        & \hspace{30pt} = \E\Big[\big(1-\L(D|\Delta X_{t^*})\big) \L_1(\Delta Y_{t^*}|\Delta X_{t^*}) \Big| D=1 \Big]\pi \nonumber \\
        & \hspace{55pt} - \E\Big[ \L(D|\Delta X_{t^*}) \L_0(\Delta Y_{t^*}|\Delta X_{t^*}) \Big| D=0 \Big] (1-\pi) \nonumber \\
        & \hspace{30pt} = \E\Big[D\big(1-\L(D|\Delta X_{t^*})\big) \L_1(\Delta Y_{t^*}|\Delta X_{t^*}) \Big] - \E\Big[ (1-D)\L(D|\Delta X_{t^*}) \L_0(\Delta Y_{t^*}|\Delta X_{t^*}) \Big] \\
        & \hspace{30pt} = \E\left[D \big(1-\L(D|\Delta X_{t^*})\big) \Big(\L_1(\Delta Y_{t^*}|\Delta X_{t^*}) - \L_0(\Delta Y_{t^*} | \Delta X_{t^*}) \Big) \right] \nonumber \\
        & \hspace{55pt} + \E\left[ \big(D-\L(D|\Delta X_{t^*})\big) \L_0(\Delta Y_{t^*}|\Delta X_{t^*}) \right] \nonumber \\
        & \hspace{30pt} = \E\left[\big(1-\L(D|\Delta X_{t^*})\big) \Big(\L_1(\Delta Y_{t^*}|\Delta X_{t^*}) - \L_0(\Delta Y_{t^*} | \Delta X_{t^*}) \Big) \Big| D=1 \right]\pi, \label{eqn:prop-twfe-a}
    \end{align}
    where the first equality holds by the law of iterated expectations, the second equality holds by \Cref{lem:lip}, the third equality holds by applying the law of iterated expectations to both terms, the fourth equality holds by adding and subtracting $\E\Big[D\big(1-\L(D|\Delta X_{t^*})\big) \L_0(\Delta Y_{t^*}|\Delta X_{t^*})\Big]$, and the last equality holds by applying the law of iterated expectations for the first term and because
    \begin{align*}
        \E\left[ \big(D-\L(D|\Delta X_{t^*})\big) \L_0(\Delta Y_{t^*}|\Delta X_{t^*}) \right] = \E\left[ \big(D-\L(D|\Delta X_{t^*})\big) \Delta X_{t^*}' \right] \beta_0 = 0,
    \end{align*}
    where the first equality holds by the definition of $\L_0(\Delta Y_{t^*}|\Delta X_{t^*})$, and the second equality holds because $\Delta X_{t^*}$ is uncorrelated with the projection error $\big(D-\L(D|\Delta X_{t^*})\big)$.

    Combining the expression in \Cref{eqn:prop-twfe-a} with the expression for the denominator from \Cref{lem:twfe-denom} completes the proof, given the definition of the weights $w(\Delta X_{t^*})$.
\end{proof}
\vspace{-5pt}

\Cref{prop:twfe-lp} says that $\alpha$ is equal to a weighted average of the linear projection of the change in outcomes over time on the change in covariates over time for the treated group relative to the same linear projection for the untreated group.  Both the weights and the linear projection terms in the proposition depend only on linear projections, making them straightforward to compute in practice.  This proposition serves as an important intermediate step for our main results.

\begin{proposition}\label{prop:twfe-decomp2} Under \Cref{ass:sampling,ass:overlap}, $\alpha$ in \Cref{eqn:fd} can be decomposed as
    \hspace*{-100pt}
    \begin{align}
        \alpha &= \E\Big[ w(\Delta X_{t^*}) \Big( \E[\Delta Y_{t^*} | X_{t^*}, X_{t^*-1}, Z, D=1] - \E[\Delta Y_{t^*} | X_{t^*}, X_{t^*-1}, Z, D=0] \Big) \Big| D=1 \Big] \label{eqn:prop-twfe-alp1}\\
        &+ \E\Big[ w(\Delta X_{t^*}) \Big( \E[\Delta Y_{t^*} | X_{t^*}, X_{t^*-1}, Z, D=0] - \L_0(\Delta Y_{t^*} | \Delta X_{t^*}) \Big) \Big| D=1 \Big], \label{eqn:prop-twfe-alp2}
    \end{align}
    where $w(\Delta X_{t^*})$ are the same weights as in \Cref{thm:twfe}.
\end{proposition}

\vspace{-5pt}
\begin{proof}
    Starting from the numerator of the expression in \Cref{prop:twfe-lp}, we have that
    \begin{align}
        & \E\left[\big(1\smallminus\L(D|\Delta X_{t^*})\big) \Big(\L_1(\Delta Y_{t^*}|\Delta X_{t^*}) \smallminus \L_0(\Delta Y_{t^*} | \Delta X_{t^*}) \Big) \Big| D\smallequals 1 \right] \pi \nonumber \\
        & \hspace{5pt} = \E\left[\big(1\smallminus\L(D|\Delta X_{t^*})\big) \Big(\E[\Delta Y_{t^*} | X_{t^*}, X_{t^*-1}, Z, D\smallequals 1] - \E[\Delta Y_{t^*} | X_{t^*}, X_{t^*-1}, Z, D\smallequals 0] \Big) \Big| D\smallequals 1 \right] \pi \nonumber \\
        & \hspace{5pt} - \E\left[\big(1\smallminus\L(D|\Delta X_{t^*})\big) \Big\{ \Big(\E[\Delta Y_{t^*} | X_{t^*}, X_{t^*-1}, Z, D\smallequals 1] - \L_1(\Delta Y_{t^*}|\Delta X_{t^*})\Big)  \right. \label{eqn:twfe-prop-b} \\
        & \left. \hspace{105pt} - \Big(\E[\Delta Y_{t^*} | X_{t^*}, X_{t^*-1}, Z, D\smallequals 0] - \L_0(\Delta Y_{t^*} | \Delta X_{t^*}) \Big) \Big\} \Big| D\smallequals 1 \right] \pi, \nonumber
    \end{align}
    which holds by adding and subtracting \begin{align*}
        \E\left[\big(1-\L(D|\Delta X_{t^*})\big) \Big(\E[\Delta Y_{t^*} | X_{t^*}, X_{t^*-1}, Z, D=1] - \E[\Delta Y_{t^*} | X_{t^*}, X_{t^*-1}, Z, D=0] \Big) \Big| D=1 \right] \pi.
    \end{align*}
    Next, notice that
    \begin{align}
        \E\left[\big(1\smallminus\L(D|\Delta X_{t^*})\big) \E[\Delta Y_{t^*} | X_{t^*}, X_{t^*-1}, Z, D\smallequals 1] \Big| D\smallequals 1 \right] = \E\left[\big(1\smallminus\L(D|\Delta X_{t^*})\big) \Delta Y_{t^*} \Big| D\smallequals 1 \right], \label{eqn:twfe-prop-c}
    \end{align}
    which holds by the law of iterated expectations.  Further, notice that
    \begin{align}
        \E\left[\big(1-\L(D|\Delta X_{t^*})\big) \L_1(\Delta Y_{t^*} | \Delta X_{t^*}) \Big| D=1 \right] = \E\left[\big(1-\L(D|\Delta X_{t^*})\big) \Delta Y_{t^*} \Big| D=1 \right], \label{eqn:prop-twfe-d}
    \end{align}
    which holds by \Cref{lem:lip}.  That the terms in \Cref{eqn:twfe-prop-c,eqn:prop-twfe-d} are equal to each other implies that the first line of  \Cref{eqn:twfe-prop-b} is equal to 0.  Combining the second line of \Cref{eqn:twfe-prop-b} with the expression for the denominator in \Cref{eqn:fwl} from \Cref{lem:twfe-denom} completes the proof.
\end{proof}
\vspace{-5pt}

\Cref{prop:twfe-decomp2} decomposes $\alpha$ into two terms: (i) a weighted average of the average path of outcomes for the treated group relative to the path of outcomes for the untreated group (conditional on covariates), and (ii) a misspecification bias term that is a weighted average of the difference between the average path of outcomes for the untreated group conditional on time-varying and time-invariant covariates and the linear projection of $\Delta Y_{t^*}$ on $\Delta X_{t^*}$ for the untreated group.

\vspace{-5pt}
\begin{proof}[\textbf{Proof of \Cref{thm:twfe}}]
    First, it immediately follows from \Cref{ass:sampling,ass:overlap,ass:conditional-parallel-trends} that
    \\ $\mathrm{ATT}(X_{t^*},X_{t^*-1},Z) = \E[\Delta Y_{t^*} | X_{t^*},X_{t^*-1},Z,D=1] - \E[\Delta Y_{t^*} | X_{t^*},X_{t^*-1},Z,D=0]$.  This implies that \Cref{eqn:prop-twfe-alp1} in \Cref{prop:twfe-decomp2} is equal to $ \E[w(\Delta X_{t^*}) \mathrm{ATT}(X_{t^*},X_{t^*-1},Z) | D=1]$.  The second term comes from adding and subtracting terms to the expression in \Cref{eqn:prop-twfe-alp2} in \Cref{prop:twfe-decomp2}.  Note that the decomposition of this misspecification bias term is non-unique, as related terms could be added and subtracted in a different order (this is a similar property to many other decompositions).  The properties of the weights hold immediately by their definitions.
\end{proof}
\vspace{-5pt}

\vspace{-5pt}
\begin{proof}[\textbf{Proof of \Cref{thm:twfe-attX}}]
    The result holds immediately from \Cref{thm:twfe} by noticing that \Cref{ass:bias-elimination} directly implies that \Cref{eqn:thm-twfe-alp-a,eqn:thm-twfe-alp-b,eqn:thm-twfe-alp-c} are all equal to 0.
\end{proof}

\end{document}